\documentclass[letterpaper,11pt]{article}

\usepackage{times} 
\usepackage{fullpage} 
\usepackage{amsmath} 
\usepackage{amssymb} 
\usepackage{amsthm} 
\usepackage{bbm} 
\usepackage{tabularx} 
\usepackage[multiple]{footmisc} 

\usepackage{graphicx} 

\usepackage{ifpdf}
\ifpdf
\usepackage[
    pdftex,bookmarks,pagebackref,
    plainpages=false, 
        pdfpagelabels=true 
        ]{hyperref}
\else \fi

\let\oldFootnote\footnote
\newcommand\nextToken\relax

\renewcommand\footnote[1]{%
    \oldFootnote{#1}\futurelet\nextToken\isFootnote}

\newcommand\isFootnote{%
    \ifx\footnote\nextToken\textsuperscript{,}\fi}

\newcommand{\inner}[2]{\langle #1 , #2\rangle}
\newcommand{\Inner}[2]{\left\langle #1 , #2\right\rangle}

\newcommand{\cls}[1]{\mathrm{#1}}

\newtheorem{theorem}{Theorem}

\newenvironment{numberedtheorem}[1]
{
  
  \begin{theorem}
}{
  \end{theorem}
  \addtocounter{theorem}{-1}
}

\newtheorem{lemma}[theorem]{Lemma}
\newtheorem{corollary}{Corollary}[theorem]
\newtheorem{proposition}[theorem]{Proposition}

\theoremstyle{definition}
\newtheorem{defn}[theorem]{Definition}

\newtheorem{problem}{Problem}

\newcommand{\pa}[1]{(#1)}
\newcommand{\Pa}[1]{\left(#1\right)}

\newcommand{\set}[1]{\{#1\}}

\DeclareMathOperator{\trace}{Tr}
\newcommand{\ptr}[2]{\trace_{#1}\pa{#2}}
\newcommand{\Ptr}[2]{\trace_{#1}\Pa{#2}}

\DeclareMathOperator{\marginal}{mar}
\newcommand{\mar}[2]{\marginal_{#1}\pa{#2}}
\newcommand{\Mar}[2]{\marginal_{#1}\Pa{#2}}

\newcommand{\ol}[1]{{\overline{#1}}}
\newcommand{\ns}{\mathrm{ns}}
\def\ot{\otimes}

\def\cA{\mathcal{A}}
\def\cB{\mathcal{B}}

\def\cS{\mathcal{S}}
\def\cT{\mathcal{T}}

\def\cX{\mathcal{X}}
\def\cY{\mathcal{Y}}

\def\bA{\mathbf{A}}
\def\bB{\mathbf{B}}

\begin{document}

\title{Interactive proofs with competing teams of no-signaling provers}

\author{
  Gus Gutoski \\[3mm]
  {\small\it
  \begin{tabular}{c}
    Perimeter Institute for Theoretical Physics\\
    Waterloo, Ontario, Canada\thanks{This research was conducted while the author was a postdoc at the Institute for Quantum Computing and School of Computer Science at the University of Waterloo in Waterloo, Ontario, Canada.}
  \end{tabular}
  }
}

\date{December 3, 2010 \\[1mm] {\small (Minor revisions: October 14, 2011 and July 3, 2013)} }

\maketitle

\begin{abstract}

This paper studies a generalization of multi-prover interactive proofs in which a verifier interacts with two competing teams of provers: one team attempts to convince the verifier to accept while the other attempts to convince the verifier to reject.
Each team consists of two provers who jointly implement a \emph{no-signaling} strategy.
No-signaling strategies are a curious class of joint strategy that cannot in general be implemented without communication between the provers, yet cannot be used as a black box to establish communication between them.
Attention is restricted in this paper to \emph{two-turn} interactions in which the verifier asks questions of each of the four provers and decides whether to accept or reject based on their responses.

We prove that the complexity class of decision problems that admit two-turn interactive proofs with competing teams of no-signaling provers is a subset of $\cls{PSPACE}$.
This upper bound matches existing $\cls{PSPACE}$ lower bounds on the following two disparate and weaker classes of interactive proof:
\begin{enumerate}

\item
Two-turn multi-prover interactive proofs with only one team of no-signaling provers.

\item
Two-turn competing-prover interactive proofs with only one prover per team.

\end{enumerate}
Our result implies that the complexity of these two models is unchanged by the addition of a second competing team of no-signaling provers in the first case and by the addition of a second no-signaling prover to each team in the second case.
Moreover, our result unifies and subsumes prior $\cls{PSPACE}$ upper bounds on these classes.

\end{abstract}

\section{Introduction}

Interactive proofs were introduced in the mid-1980's as a generalization of the concept of efficient proof verification and the complexity class $\cls{NP}$ \cite{Babai85,BabaiM88,GoldwasserM+89}.
Informally speaking, an \emph{interactive proof} is a conversation between a randomized polynomial-time \emph{verifier} and a computationally unbounded \emph{prover} regarding some common input string $x$.
A decision problem $L$ is said to admit an interactive proof if there exists a verifier such that (i) if $x$ is a yes-instance of $L$ then there is a prover who can convince the verifier to accept $x$ with high probability, and (ii) if $x$ is a no-instance of $L$ then no prover can convince the verifier to accept $x$ except with small probability.
In a dramatic testament to the surprising power of randomization and interaction, it was soon discovered that every problem in $\cls{PSPACE}$ admits an interactive proof, yielding the well-known identity $\cls{IP}=\cls{PSPACE}$ \cite{LundF+92,Shamir92}.

\subsubsection*{Multi-prover interactive proofs, no-signaling provers}

The fruitful study of interactive proofs has prompted further generalization of the model.
One such generalization is the \emph{multi-prover} interactive proof model of Ben-Or \emph{et al.}~\cite{Ben-OrG+88} wherein several provers cooperate in their attempt to convince the verifier to accept the input string $x$.
The key aspect that sets this model apart from single-prover interactive proofs is the fact that the provers cannot communicate with one another during the protocol.
Amazingly, this small distinction is enough to increase the power of the model from $\cls{PSPACE}$ all the way up to $\cls{NEXP}$ \cite{BabaiF+91,FortnowR+94}, even when the interaction is restricted to only two turns with only two provers \cite{FeigeL92}.
In terms of complexity classes, the corresponding identity is $\cls{MIP}=\cls{NEXP}$.

Intermediate classes of multi-prover interactive proofs are obtained by tinkering with the set of strategies available to the provers.
Consider, for example, a joint strategy where the distribution of answers from one prover is independent of the question asked of the other prover---these are the \emph{no-signaling} strategies.
Clearly, such a strategy cannot be used in a black-box fashion by the provers to establish communication.
At first glance it may seem that the no-signaling condition is equivalent to the standard definition of a multi-prover interactive proof.
However, there exist no-signaling strategies that cannot be implemented without communication between the provers, suggesting that this model might be a nontrivial intermediary between single- and multi-prover interactive proofs.

Indeed, it was established by Ito, Kobayashi, and Matsumoto \cite{ItoK+09} that the two-turn, two-prover protocol for $\cls{PSPACE}$ of Cai, Condon, and Lipton \cite{CaiC+94} is sound even against no-signaling provers.
By contrast, $\cls{PSPACE}$ is known not to admit two-turn \emph{single}-prover interactive proofs unless the polynomial hierarchy collapses and $\cls{PSPACE}=\cls{AM}$ \cite{Babai85,GoldwasserS89}.
A converse result was proven by Ito, who showed that every problem that admits a two-turn interactive proof with two no-signaling provers is also in $\cls{PSPACE}$ \cite{Ito09}.
Thus, the interactive proof model is even \emph{more} sensitive to change than suggested by the difference between single- and multi-prover interactive proofs, as even the smaller difference between no-signaling and standard multi-prover interactive proofs is sufficient to make the jump from $\cls{PSPACE}$ up to $\cls{NEXP}$ (at least in the case of two turns and two provers).

In addition to this prior work, parallel repetition results for multi-prover interactive proofs with no-signaling provers were established in Refs.~\cite{Holenstein09,KempeR10}.
The reader is referred to Ito \cite{Ito09} for more detailed history and references.

\subsubsection*{Inspiration from quantum information}

Though the present paper contains no formal discussion of quantum information, it is proper to acknowledge its role in motivating the study of no-signaling provers.
Interest in this model was originally drawn from the study of multi-prover \emph{quantum} interactive proofs, in which the provers (and possibly the verifier) are permitted to exchange and manipulate quantum information.

It is easy to see that interactive proofs with ordinary, ``classical'' provers are not affected by the ability of the provers to sample from a common source of randomness.
Quantum provers, on the other hand, might use shared pieces of some entangled quantum state to implement a \emph{nonlocal} strategy that correlates their messages in ways that cannot otherwise be achieved.
(The phenomenon of nonlocality was famously branded by Einstein as ``spooky action at a distance.'')
Indeed, some classical protocols which are sound against classical provers are known to become unsound when the provers share entanglement \cite{CleveH+04,CleveGJ07}.
Incredibly, the class $\cls{MIP}^*$ of decision problems that admit multi-prover interactive proofs with entanglement-sharing provers is not even known to be computable, a consequence of the fact that no bound is known on the amount of entanglement needed to approximate an optimal strategy.

Whereas the set of entanglement-sharing strategies is highly complex, the set of no-signaling strategies is relatively simple and it includes entanglement-sharing strategies as a proper subset.
So, for example, any protocol that is sound against no-signaling provers is also sound against quantum provers who share entanglement.
It is also interesting to find differences between no-signaling strategies and entanglement-sharing strategies, as this difference sheds light on the extent to which no-signaling can be used as a proxy for shared entanglement.
In some protocols the allowance of arbitrary no-signaling strategies leads to implausible consequences \cite{vanDam05,BrassardBL+06}.
Such protocols can be viewed as mathematical evidence against physical theories that admit so-called ``super-strong'' nonlocality such as that found in no-signaling strategies but not entanglement-sharing strategies.
The present paper establishes a scenario in which two no-signalling provers are equivalent to two signaling provers.

\subsubsection*{Interactive proofs with competing provers}

Another generalization of the single-prover model is an interactive proof with \emph{competing provers}, in which one prover tries to convince the verifier to accept the input string $x$ while the other prover tries to convince the verifier to reject $x$.
One may consider proofs in which all messages are known to all provers (\emph{complete} information) or in which each prover sees only the messages he exchanges with the verifier (\emph{incomplete} information).
These two forms of competing-prover interactive proofs were studied by several authors in the 1990's \cite{FeigeS+90,FeigeS92,FeigenbaumK+95,FeigeK97}.
But for our purpose in this paper it only makes sense to consider protocols with incomplete information.

In the jargon of game theory, interactive proofs with competing provers are \emph{zero-sum games}, about which there exists a vast body of literature in computer science, economics, and other disciplines.
For instance, fast algorithms for zero-sum games of incomplete information in \emph{extensive form} imply that the complexity class $\cls{RG}$ of problems that admit interactive proofs with competing provers is a subset of $\cls{EXP}$ \cite{KollerM92,KollerMvS94}.
Feige and Kilian proved the reverse containment \cite{FeigeK97}, yielding the competing-prover analogy $\cls{RG}=\cls{EXP}$ of the aforementioned identity $\cls{IP}=\cls{PSPACE}$ for single-prover interactive proofs.

Feige and Kilian also studied \emph{two-turn} interactive proofs with competing provers, providing a matching upper and lower bound of $\cls{PSPACE}$ on the complexity of this model \cite{FeigeK97}.
The complexity of $k$-turn interactive proofs with competing provers
for constants $k\geq 3$ is an open question of interest to both complexity theorists and game theorists alike.

\subsubsection*{Interactive proofs with competing teams of provers, our result}

Multi-prover interactive proofs and interactive proofs with competing provers are two distinct generalizations of the single-prover model.
The next logical step is to unify these two generalizations in the obvious way via interactive proofs with competing \emph{teams} of provers.
Combining established naming conventions for complexity classes based on interactive proofs, we let $\cls{MRG}$ denote the class of decision problems that admit interactive proofs with competing teams of provers.

To the author's knowledge, this model was considered prior to the present work only by Feigenbaum, Koller, and Shor \cite{FeigenbaumK+95}.
Those authors studied this class under the game-theoretic guise of zero-sum games of \emph{imperfect recall} and proved the containments
\[ \cls{EXP^{NP}} \subseteq \cls{MRG} \subseteq \cls{\Sigma^{EXP}_2} \cap \cls{\Pi^{EXP}_2} \]
where $\cls{\Sigma^{EXP}_2}$ and $\cls{\Pi^{EXP}_2}$ are classes in the second level of the exponential hierarchy, which is the exponential-time version of the familiar polynomial hierarchy.

In this paper we consider interactive proofs with competing teams of \emph{no-signaling} provers.
Our main result is as follows.

\begin{theorem}[$\cls{MRG}_\ns(2,2)=\cls{PSPACE}$]
\label{thm:main-result}

Every decision problem that admits a two-turn interactive proof with competing teams of two no-signaling provers per team is also in $\cls{PSPACE}$.
Letting $\cls{MRG_{ns}}(2,2)$ denote the complexity class of such problems, it follows that $\cls{MRG}_\ns(2,2)=\cls{PSPACE}$.

\end{theorem}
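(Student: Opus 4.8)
The plan is to prove the two inclusions separately, spending essentially all of the effort on the upper bound. The containment $\cls{PSPACE}\subseteq\cls{MRG}_\ns(2,2)$ is inherited from prior work with only a trivial embedding: the two-turn, two-prover protocol of Cai, Condon and Lipton \cite{CaiC+94} for $\cls{PSPACE}$ is sound against no-signaling provers \cite{ItoK+09}, and any one-team protocol sits inside our model once we append a second team of two provers whose messages the verifier simply ignores. Ignored provers are powerless, so neither completeness nor soundness changes; alternatively one can start from the two-turn competing-prover protocol of Feige and Kilian \cite{FeigeK97} and pad each team with a second, ignored prover. Either route shows that $\cls{MRG}_\ns(2,2)$ contains both of the weaker $\cls{PSPACE}$-complete classes mentioned in the abstract, and so the real content of the theorem is the reverse inclusion $\cls{MRG}_\ns(2,2)\subseteq\cls{PSPACE}$.

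For the upper bound I would first exploit the fact that a two-turn protocol is one-shot: the verifier samples a question tuple $(q_1,q_2,s_1,s_2)$ for the four provers from a distribution $\pi$ computed by its circuit, and upon receiving the answers it accepts according to a predicate $V$, also computed by its circuit. Because each team sees only its own exchange with the verifier, a joint strategy factors as a product of a no-signaling strategy $p(a_1a_2\mid q_1q_2)$ for the accepting team and a no-signaling strategy $r(b_1b_2\mid s_1s_2)$ for the rejecting team, and the acceptance probability is bilinear in $(p,r)$. Thus $\gv=\max_p\min_r\,\inner{p}{Mr}$, where $M$ is the matrix obtained by averaging $V$ against $\pi$ — an object of dimension $2^{\mathrm{poly}}$ but with entries computable in polynomial time from the verifier's circuit — and $p,r$ range over the respective two-prover no-signaling polytopes. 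These polytopes are compact and convex and the form is bilinear, so the minimax theorem gives $\gv=\min_r\max_p\,\inner{p}{Mr}$ as well, and it suffices to approximate $\gv$ within the completeness–soundness gap (which I take to be bounded below by an inverse polynomial) in polynomial space.

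To get at $\gv$ I would dualize. Each no-signaling polytope is carved out of the nonnegative orthant by polynomially many \emph{types} of linear equalities — normalization for every pair of questions together with the no-signaling marginal conditions — so for fixed $r$ the inner problem $\max_p\inner{p}{Mr}$ is a linear program; substituting its dual and combining with the outer minimization expresses $\gv$ as the optimal value of a single linear program whose constraint matrix and cost vector are computable coordinatewise in polynomial time from the verifier's circuit, even though the program has $2^{\mathrm{poly}}$ variables and constraints. I would then invoke a polynomial-space (indeed parallelizable) approximation scheme for such succinctly described linear / min-max programs, in the spirit of Ito's treatment \cite{Ito09} of the one-team no-signaling value: a multiplicative-weights iteration that runs for polynomially many steps, maintains only cumulative loss quantities, and recomputes on demand any individual coordinate or inner product it touches directly from the succinct description, so that the whole computation fits in polynomial space. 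Tuning the number of iterations to the inverse-polynomial gap yields the required estimate of $\gv$, hence $\cls{MRG}_\ns(2,2)\subseteq\cls{PSPACE}$.

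The step I expect to be the main obstacle is this last one: certifying that the exponentially large min-max program really satisfies the hypotheses of a polynomial-space approximation scheme. This requires controlling the magnitudes of optimal primal and dual solutions (so that the iteration converges in polynomially many rounds), checking that every quantity an iteration manipulates — loss vectors, normalizations, and the projections or support-function evaluations compatible with the no-signaling equalities — is computable coordinatewise in polynomial space from the verifier's circuit, and carrying the two-provers-per-team tensor structure on both sides of the form throughout without any exponential blow-up in the bookkeeping. Once the reformulation is laid out cleanly these are technical matters, but they are where the genuine work lies; the minimax reduction and the lower bound are comparatively routine.
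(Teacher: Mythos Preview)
Your setup and lower bound are fine and match the paper. The gap is exactly where you flag it, and it is not merely ``technical matters'': dualizing the inner maximization and folding it into the outer minimization does yield a single succinctly described LP, but a generic exponential-size LP is \emph{not} known to admit a polynomial-space (equivalently, $\cls{NC}$) approximation scheme---linear programming is $\cls{P}$-complete, so one needs special structure. Ito's one-team result goes through because his LP happens to be a mixed packing--covering instance; the paper explicitly observes that this structure is lost once a second team is present, so ``in the spirit of Ito'' does not carry over. Nor does appealing to multiplicative weights directly help: MWUM produces iterates in a product of simplices, whereas your variables now live in the no-signaling polytope on one side and an unbounded dual cone on the other, and the ``projections or support-function evaluations compatible with the no-signaling equalities'' you would need are themselves LPs of the same flavor you are trying to solve.

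What the paper supplies, and what your plan is missing, is a device that lets MWUM run over its natural domain. Rather than constrain Alice to the no-signaling polytope, the paper \emph{relaxes} her domain to all stochastic matrices (so MWUM can be applied columnwise over genuine simplices) and compensates by giving Bob additional bounded \emph{penalty} variables $\Pi_0,\Pi_1$ that punish any no-signaling violation. The crux is a rounding theorem showing the relaxed min--max value equals $\lambda(V)$ and that a near-optimal relaxed Alice can be efficiently rounded to a near-optimal no-signaling Alice. With that in hand, each MWUM iteration needs only (i) columnwise multiplicative updates on stochastic matrices and (ii) a best-response oracle for Bob, which is precisely the one-team no-signaling optimization Ito already solved. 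Absent an analogue of this relaxation-plus-rounding idea (or some other reduction to a packing/covering-type instance), your outline does not yet yield a $\cls{PSPACE}$ algorithm.
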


This upper bound matches the aforementioned $\cls{PSPACE}$ lower bounds on the following two disparate and weaker classes of interactive proof:
\begin{enumerate}

\item
Two-turn multi-prover interactive proofs with only one team of no-signaling provers \cite{CaiC+94,ItoK+09}.

\item
Two-turn competing-prover interactive proofs with only one prover per team \cite{FeigeK97}.

\end{enumerate}
Our result implies that the complexity of these two models is unchanged by the addition of a second competing team of no-signaling provers in the first case and by the addition of a second no-signaling prover to each team in the second case.
Moreover, our result unifies and subsumes prior $\cls{PSPACE}$ upper bounds on these classes \cite{Ito09,FeigeK97}.

Interactive proofs with competing teams of no-signaling provers were not considered prior to the present work.
As such, no explicit upper bound on $\cls{MRG}_\ns(2,2)$ has been observed until now.
A trivial upper bound for this class, as for $\cls{MRG}$, is the second level of the exponential hierarchy.

\subsubsection*{Techniques}

Theorem \ref{thm:main-result} is proven by means of an efficient parallel algorithm that, given an explicit description of a verifier and an accuracy parameter $\delta$, finds no-signaling strategies for the teams that are within $\delta$ of optimal.
Containment in $\cls{PSPACE}$ then follows in the usual way by observing that the description of the verifier has size exponential in the length of the input string $x$ and then employing the fact that a parallel algorithm with succinct input can be simulated in polynomial space \cite{Borodin77}.

Our algorithm is an example of the \emph{multiplicative weights update method (MWUM)} as discussed in the survey paper \cite{v008a006} and in the PhD thesis of Kale \cite{Kale07}.
(See also Ref.\ \cite{WarmuthK06}.)
In its simplest form, the MWUM solves a min-max optimization problem on probability distributions.
In the present paper we use the MWUM to optimize not just a \emph{single} distribution, but many distributions \emph{simultaneously} in the form of a stochastic matrix that represents a strategy for one of the teams.
This trick
seems
to work only for two-turn protocols, as otherwise it is not clear how to ensure sufficient accuracy.

Let us compare our algorithm to the two previous algorithms it subsumes:
\begin{itemize}

\item
The polynomial-space algorithm of Feige and Kilian for two-turn interactive proofs with competing provers \cite{FeigeK97} is a complicated and highly specialized precursor to the MWUM that, like our algorithm, optimizes over stochastic matrices that represent strategies for the provers.

Their algorithm works by nondeterministically guessing the entries of the matrix and scanning them in a read-once fashion.
This approach cannot be extended to optimize over no-signaling strategies, as the read-once model does not
allow verification of the no-signaling condition.

\item
The parallel algorithm of Ito for two-turn, two-prover interactive proofs with no-signaling provers \cite{Ito09} is essentially a reduction to the \emph{mixed packing and covering problem}, which is a special type of linear program that
is known to admit an efficient parallel algorithm \cite{Young01}.

This approach, too, cannot be extended to competing teams of no-signaling provers, as any linear programming formulation of the
protocol is unlikely to be a mixed packing and covering problem.

\end{itemize}

Our study has benefitted
from the valuable experience of recent applications of the MWUM to parallel algorithms for quantum complexity classes \cite{JainW08,JainU+09,JainJ+11,Wu10,GutoskiW13-invited}.
Indeed, we follow the same high-level approach as the recent proof of $\cls{DQIP}=\cls{DIP}=\cls{PSPACE}$ \cite{GutoskiW13-invited}.
Namely,
\begin{itemize}

\item
The domain of admissible (no-signaling) strategies is a strict subset of the ``natural'' domain (stochastic matrices) for the MWUM.

\item
To get around this problem, the strategy domain is extended to \emph{all} the stochastic matrices and a \emph{penalty term} is introduced so as to remove any incentive for a team to use an inadmissible strategy.
(See Section \ref{sec:relaxation}).

\item
Finally, one must prove a ``rounding'' theorem (Corollary \ref{cor:rounding}), which establishes that near-optimal, fully-admissible strategies can be obtained from near-optimal strategies in the extended domain with penalty term.

\end{itemize}
Section \ref{sec:conclusion} describes some difficulties that arise when this approach is applied to complexity classes beyond $\cls{MRG}_\ns(2,2)$ and lists some open problems.

\section{Preliminaries}
\label{sec:defs}

\subsection{Definition of two-turn interactive proofs with competing teams of provers}
\label{sec:defs:mrg}

In this paper we are concerned with decision problems that admit two-turn interactive proofs with competing teams of no-signaling provers.
Let us clarify this concept.
A \emph{two-turn verifier} is a randomized polynomial-time algorithm that, given an input string $x$, produces questions $i,j$ for the two teams of provers.
The teams select their answers $k,l$ (possibly using randomness to do so) and then the verifier accepts or rejects the input $x$ according to some boolean function of $i,j,k,l$.
For convenience, the teams shall be called \emph{Team Alice} and \emph{Team Bob}.
It is the goal of Team Alice to convince the verifier to accept the input string $x$, while Team Bob's goal is to convince the verifier to reject $x$.

In the protocols we consider each team consists of two provers.
The provers of Team Alice shall be called \emph{Alice$_0$} and \emph{Alice$_1$}, while the provers of Team Bob shall be called \emph{Bob$_0$} and \emph{Bob$_1$}.
Each individual prover on each team receives his or her own private question and supplies his or her own separate answer to the verifier.
In particular, the question $i$ asked of Team Alice is actually a pair $i=(i_0,i_1)$ with question $i_c$ going to prover Alice$_c$ for both values of the bit $c\in\set{0,1}$.
Similarly, the question $j$ asked of Team Bob is also a pair $j=(j_0,j_1)$ with question $j_c$ going to prover Bob$_c$.
The answers $k,l$ received from the two teams are also pairs $k=(k_0,k_1)$ and $l=(l_0,l_1)$ with answers $k_c$ and $l_c$ coming from Alice$_c$ and Bob$_c$, respectively.
The entire interaction is illustrated in Figure \ref{fig:mrg22}.

\begin{figure}
  \begin{center}
\includegraphics{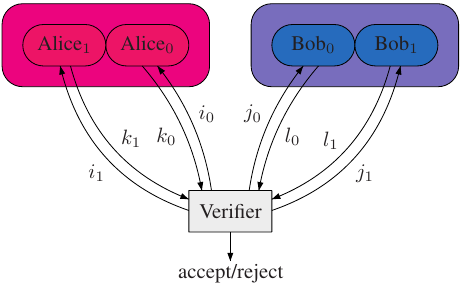}
  \end{center}
  \caption{A two-turn interactive proof with competing teams of two no-signaling provers per team.}
  \label{fig:mrg22}
\end{figure}

Each team may jointly implement any no-signaling strategy in order to produce its answers.
Briefly, a strategy for, say, Team Alice is \emph{no-signaling} if the marginal distribution on answers $k_0$ from Alice$_0$ does not depend upon the question $i_1$ asked of Alice$_1$ and \emph{vice versa}.
No-signaling strategies are discussed in greater detail in Section \ref{sec:defs:no-sig}.

A decision problem $L$ is said to admit a two-turn interactive proof with competing teams of no-signaling provers with \emph{completeness} $c$ and \emph{soundness} $s$ if there exists a fixed two-turn verifier with the following properties:
\begin{description}

\item[Completeness.]
If the input string $x$ is a yes-instance of $L$ then there exists a no-signaling strategy for Team Alice that convinces the verifier to accept $x$ with probability at least $c$, regardless of the no-signaling strategy employed by Team Bob.

\item[Soundness.]
If the input string $x$ is a no-instance of $L$ then there exists a no-signaling strategy for Team Bob that convinces the verifier to reject $x$ with probability at least $1-s$, regardless of the no-signaling strategy employed by Team Alice.

\end{description}
The completeness and soundness parameters need not be fixed constants.
Rather, they may vary as a function of the input string $x$.
The complexity class $\cls{MRG}_\ns(2,2)$ consists of all decision problems that admit two-turn interactive proofs with competing teams of two no-signaling provers per team with completeness $c$ and soundness $s$ such that there exists a fixed polynomial-bounded function $p$ on strings with $c-s\geq 1/p$.
(The first parameter of the class $\cls{MRG}_\ns(2,2)$ denotes the number of provers per team, the second denotes the number of turns in the protocol.
It is also common to parameterize interactive proof classes according to the number of \emph{rounds} of communication, rather than the number of \emph{turns}.
Under this scheme, the class $\cls{MRG}_\ns(2,2)$ might be called $\cls{MRG}_\ns(2,1)$ by some authors.)

In this paper we prove $\cls{MRG}_\ns(2,2)\subseteq\cls{PSPACE}$ (Theorem \ref{thm:main-result}).
It then follows from existing lower bounds on weaker classes \cite{ItoK+09,FeigeK97} that \( \cls{MRG}_\ns(2,2)=\cls{PSPACE}. \)

\subsection{Notation, the Kronecker product}
\label{sec:defs:notation}

To each interactive proof with input $x$ we associate eight distinct finite-dimensional real Euclidean spaces---four \emph{question} spaces and four \emph{answer} spaces.
These spaces are denoted as follows for both $c\in\set{0,1}$:
\begin{align*}
  \cS_c \quad & \textrm{The question space for prover Alice$_c$}
& \cA_c \quad & \textrm{The answer space for prover Alice$_c$}\\
  \cT_c \quad& \textrm{The question space for prover Bob$_c$}
& \cB_c \quad& \textrm{The answer space for prover Bob$_c$}
\end{align*}
The dimension of each space is the number of distinct questions or answers available to that prover.
(For example, prover Alice$_0$ can be asked any of $\dim(\cS_0)$ distinct questions and may respond with any of $\dim(\cA_0)$ distinct answers.)
Individual questions or answers are indexed by positive integers denoted for both $c\in\set{0,1}$ as follows:
\begin{align*}
  \textrm{Questions for Alice$_c$}: \quad & i_c=1,\dots,\dim(\cS_c)\\
  \textrm{Questions for Bob$_c$}  : \quad & j_c=1,\dots,\dim(\cT_c)\\
  \textrm{Answers from Alice$_c$} : \quad & k_c=1,\dots,\dim(\cA_c)\\
  \textrm{Answers from Bob$_c$}   : \quad & l_c=1,\dots,\dim(\cB_c)
\end{align*}
Since the verifier acts in polynomial time, the bit length of the questions and answers is at most a polynomial in the bit length $|x|$ of the input string $x$.
Since $n$ bits suffice to encode $2^n$ distinct questions or answers, the dimension of the spaces $\cS_c,\cT_c,\cA_c,\cB_c$ can be exponential in $|x|$.

The \emph{Kronecker product} (or \emph{tensor product}) of two spaces $\cX,\cY$ is another space with dimension $\dim(\cX)\dim(\cY)$.
This product space is typically denoted by $\cX\ot\cY$, which we abbreviate to $\cX\cY$.
Kronecker products involving the eight spaces $\cS_c,\cT_c,\cA_c,\cB_c$ are further abbreviated so that \[\cS_{01}=\cS_0\cS_1=\cS_0\ot\cS_1\] and so on.
The Kronecker product extends in a natural way to vectors and linear operators.
In this paper each vector or linear operator is implicitly associated with its representation as a column or a matrix, for which the Kronecker product is given by a straightforward formula.
For example, if $A,B$ are $2\times 2$ matrices given by
\[
  A= \left[ \begin{array}{cc} a & b \\ c & d \end{array} \right]
  ,\qquad
  B= \left[ \begin{array}{cc} p & q \\ r & s \end{array} \right]
\]
then the Kronecker product $A\ot B$ is given by
\[
  A\ot B =
  \left[ \begin{array}{cc} aB & bB \\ cB & dB \end{array} \right]
  =
  \left[ \begin{array}{cc}
      a \left[ \begin{array}{cc} p & q \\ r & s \end{array} \right]
    & b \left[ \begin{array}{cc} p & q \\ r & s \end{array} \right] \\
      c \left[ \begin{array}{cc} p & q \\ r & s \end{array} \right]
    & d \left[ \begin{array}{cc} p & q \\ r & s \end{array} \right]
  \end{array} \right]
  =
  \left[
    \begin{array}{cccc}
    ap & aq & bp & bq \\
    ar & as & br & bs \\
    cp & cq & dp & dq \\
    cr & cs & dr & ds
    \end{array}
  \right].
\]
This definition extends in the obvious way to arbitrary matrices of any dimension, including column vectors and other non-square matrices.

We also make use of the following symbols:
\begin{center}
\begin{tabularx}{\textwidth}{lX}
  $e_\cX$ & The all-ones column vector of dimension $\dim(\cX)$. \\
  $I_\cX$ & The identity matrix acting on $\cX$. \\
  $M^*$ & The \emph{adjoint} of a linear mapping $M$.
    If $M$ is a matrix or column vector then $M^*$ is simply the transpose of $M$.\\
  $\inner{A}{B}$ & The \emph{matrix inner product}, defined as $\ptr{}{A^*B}$.
    This inner product is defined only when the dimensions of $A,B$ are equal.
    If $A,B$ are vectors then $\inner{A}{B}$ is called the \emph{vector inner product}.\\
  $\leq,\geq$ & Matrix inequalities are entrywise.\\
  $\ol{c}$ & Given a bit $c\in\set{0,1}$, the compliment $\ol{c}$ is given by $\ol{c}=1$ if $c=0$, otherwise $\ol{c}=0$.
\end{tabularx}
\end{center}

\subsection{Min-max formalism for interactive proofs with competing provers}
\label{sec:defs:min-max}

Given a fixed two-turn verifier and a fixed input string $x$, let $\pi_{i,j}$ denote the probability with which the verifier asks questions $i=(i_0,i_1)$ to Team Alice and $j=(j_0,j_1)$ to Team Bob.
For each 4-tuple $(i,j)$ of questions to the provers let $v_{i,j} \in\cA_{01}\cB_{01}$ denote the 0-1 vector of \emph{payouts} to Team Bob.
That is, for each $k=(k_0,k_1)$ and each $l=(l_0,l_1)$ the $(k,l)$th entry of $v_{i,j}$ is either zero or one according to whether the verifier accepts or rejects $x$ in the event that the verifier asks questions $(i,j)$ to the teams and they respond with answers $(k,l)$.\footnote{
  One could consider a more general referee in which the payouts are awarded probabilistically so that each entry of $v_{i,j}$ lies in the interval $[0,1]$.
  But it is easily seen that this model is equivalent to the one we have just described.}\footnote{
  The payout vector $v_{i,j}$ is defined so that 0 indicates acceptance of $x$ while 1 indicates rejection.
  This arbitrary choice is opposite of convention, but it better facilitates the forthcoming presentation of our multiplicative weights update algorithm.}
Consider the entrywise nonnegative matrix \[ V:\cS_{01}\cT_{01}\to\cA_{01}\cB_{01} \]
whose $(i,j)$th column is $\pi_{i,j} v_{i,j}$.
This matrix uniquely specifies the actions of the verifier.

Strategies for the teams are specified as follows.
For each pair $i$ of questions let $a_i\in\cA_{01}$ denote the probability vector of Team Alice's responses to $i$.
That is, for each pair $k$ of answers the $k$th entry of $a_i$ denotes the probability with which Team Alice replies with answers $k$ given that questions $i$ were asked.
Thus, the actions of Team Alice are uniquely specified by the stochastic matrix
\[ A:\cS_{01}\to\cA_{01} \]
whose $i$th column is $a_i$.
Similarly, for each pair $j$ of questions let $b_j\in\cB_{01}$ denote the probability vector of Team Bob's responses to $j$.
The actions of Team Bob are uniquely specified by the stochastic matrix
\[ B:\cT_{01}\to\cB_{01} \]
whose $j$th column is $b_j$.
Not every stochastic matrix denotes a valid no-signaling strategy for the teams.
Criteria for no-signaling strategies are discussed in Section \ref{sec:defs:no-sig}.
For now, it suffices to note that the set of all strategies available to each team is a compact convex subset of stochastic matrices.

Conditioned on the verifier asking questions $(i,j)$, it is clear that the probability of rejection is given by the vector inner product
\[ \Inner{v_{i,j}}{a_i\ot b_j}. \]
It follows that the probability of rejection---taken over all questions $(i,j)$---given strategies $A$ for Team Alice and $B$ for Team Bob is given by the matrix inner product
\[
  \Pr[\textrm{$V$ rejects $x$} \mid A,B] = \inner{V}{A\ot B}
  = \sum_{i,j}
  \pi_{i,j} \Inner{v_{i,j}}{a_i\ot b_j}.
\]

Of course, Team Bob wishes to maximize this quantity while Team Alice wishes to minimize this quantity.
Given that the above inner product is bilinear in $(A,B)$ and that the sets of admissible strategies for the two teams are compact and convex, it follows from standard min-max theorems \cite{Ville38,Fan53} that every interactive proof with verifier $V$ has an \emph{equilibrium value}, which we denote by $\lambda(V)$, given by
\[
  \lambda(V)
  = \min_A \max_B \inner{V}{A\ot B}
  = \max_B \min_A \inner{V}{A\ot B}
\]
where the minimum is over all no-signaling matrices $A:\cS_{01}\to\cA_{01}$ and the maximum is over all no-signaling matrices $B:\cT_{01}\to\cB_{01}$.
In particular, for every protocol there exists at least one \emph{equilibrium point} $(A^\star,B^\star)$ with the property that
\begin{align*}
  \inner{V}{A^\star\ot B} &\leq \lambda(V) \quad \textrm{for all $B$}, \\
  \inner{V}{A\ot B^\star} &\geq \lambda(V) \quad \textrm{for all $A$}.
\end{align*}
Thus, the strategy $B^\star$ always ensures maximum likelihood of rejection, while $A^\star$ always ensures minimum likelihood of rejection.

This min-max theorem applies to every min-max expression considered throughout this paper.
Henceforth we do not bother to explicitly remark upon this fact.
Here and throughout the paper we adopt the convention that for any min-max problem of the form
\[ \nu(g) = \min_{a\in\bA}\max_{b\in\bB} g(a,b) \]
elements $\tilde a\in\bA$ and $\tilde b\in\bB$ are \emph{$\delta$-optimal} if
\begin{align*}
  g(\tilde a,b) &\leq \nu(g) + \delta \quad \textrm{for all $b\in\bB$}, \\
  g(a,\tilde b) &\geq \nu(g) - \delta \quad \textrm{for all $a\in\bA$}.
\end{align*}
Elements that are $0$-optimal---such as $A^\star,B^\star$ above---are simply called \emph{optimal}.

\subsection{Notation for marginal distributions}
\label{sec:defs:mar}

Before we discuss no-signaling strategies in detail it is beneficial to introduce notation for marginal probability distributions
that will be used throughout the remainder of this paper.
Suppose, for instance, that $a\in\cA_{01}$ is a probability vector of answers from Team Alice to some question from the verifier.
We let $\mar{\cA_1}{a}\in\cA_0$ denote the probability vector for the marginal distribution on answers from the prover Alice$_0$.
Basic probability theory dictates that the mapping $\marginal_{\cA_1}$ satisfy
\[ \textrm{$k_0$th entry of $\mar{\cA_1}{a}$} \equiv \sum_{k_1=1}^{\dim(\cA_1)} \textrm{$(k_0,k_1)$th entry of $a$}. \]
Of course, this mapping may be extended to arbitrary real vectors.
For arbitrary spaces $\cX,\cY$ the linear mapping $\marginal_\cY$ is defined by
\[ \marginal_\cY : \cX\cY \to \cX : x\ot y\mapsto \inner{e_\cY}{y}x. \]
(The matrix representation of $\marginal_\cY$ is $e_\cY^*\ot I_\cX$.)
While this mapping is primarily intended to denote marginal probability distributions, we will have occasion to use it on non-probability vectors in this paper.

The mapping $\marginal_\cY$ is to vectors as the partial trace is to square matrices.
The \emph{partial trace} $\trace_\cY$ is a linear mapping defined by
\[ \trace_\cY:X\ot Y \mapsto \ptr{}{Y} X \]
for any square matrices $X:\cX\to\cX$ and $Y:\cY\to\cY$.
Readers familiar with quantum information know that the state of a quantum register can be computed from a joint state of several registers via the partial trace.
So too with probability distributions: the distribution on states of a classical register can be computed from a joint distribution on states of several registers via $\marginal_\cY$.

The mapping $\marginal_\cY$ extends naturally from vectors to matrices by applying $\marginal_\cY$ to each column:
\[ \textrm{$i$th column of $\mar{\cY}{A}$} \equiv \Mar{\cY}{\textrm{$i$th column of $A$}}. \]
So, for example, if Team Alice acts according to the stochastic matrix $A$ then the stochastic matrix \[ \mar{\cA_1}{A} : \cS_{01}\to\cA_0 \] describes the ``marginal'' strategy for prover Alice$_0$.
That is, the $(i_0,i_1)$th column of $\mar{\cA_1}{A}$ is the distribution on answers $k_0$ from Alice$_0$ given questions $(i_0,i_1)$ from the verifier.

\subsection{Characterization of no-signaling strategies}
\label{sec:defs:no-sig}

Recall that a strategy for Team Alice is \emph{no-signaling} if for both values of the bit $c\in\set{0,1}$ the marginal distribution on answers $k_c$ from Alice$_c$ does not depend on the question $i_\ol{c}$ asked of Alice$_\ol{c}$.

In terms of Team Alice's stochastic matrix $A$, this condition means that for each $i_c$ the $(i_0,i_1)$th column of $\mar{\cA_\ol{c}}{A}$ is identical for all subindices $i_\ol{c}$.
Letting $a_{i_c}$ denote this fixed probability vector and letting $A_c:\cS_c\to\cA_c$ denote the stochastic matrix whose columns are $a_{i_c}$, the above condition can be written as
\[ \mar{\cA_c}{A} = A_c \ot e_{\cS_\ol{c}}^*. \]
We have just proven the following simple proposition.

\begin{proposition}[Characterization of no-signaling strategies]
\label{prop:strategy}

A stochastic matrix $A:\cS_{01}\to\cA_{01}$ denotes a no-signaling strategy for Team Alice if and only if for both values of the bit $c\in\set{0,1}$ there exists a stochastic matrix $A_c:\cS_c\to\cA_c$ such that
\[  \mar{\cA_\ol{c}}{A} = A_c \ot e_{\cS_\ol{c}}^*.  \]
A similar characterization holds for Team Bob.

\end{proposition}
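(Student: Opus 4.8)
The plan is to prove both implications directly, by translating the definition of a no-signaling strategy into the language of stochastic matrices and the marginal operator $\marginal$ from Section \ref{sec:defs:mar}; this just makes precise the chain of observations sketched in the paragraph preceding the statement. No genuinely hard step is involved: the only point requiring care is matching the index ordering of $A_c \ot e_{\cS_\ol{c}}^*$ to the convention that fixes $\cS_{01} = \cS_0 \cS_1$, and this is purely notational bookkeeping rather than a mathematical obstacle.

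For the forward implication, fix $c \in \set{0,1}$ and consider the stochastic matrix $\mar{\cA_\ol{c}}{A} : \cS_{01} \to \cA_c$. By definition of $\marginal_{\cA_\ol{c}}$, its $(i_0,i_1)$th column is precisely the marginal distribution on answers $k_c$ from prover Alice$_c$ conditioned on the verifier asking questions $(i_0,i_1)$. The no-signaling condition asserts exactly that this column depends only on $i_c$ and not on $i_\ol{c}$; write $a_{i_c} \in \cA_c$ for the common value and let $A_c : \cS_c \to \cA_c$ be the matrix whose $i_c$th column is $a_{i_c}$. Then $\mar{\cA_\ol{c}}{A}$ is the matrix whose $(i_c,i_\ol{c})$th column equals $a_{i_c}$ for every $i_\ol{c}$, which is exactly the Kronecker product $A_c \ot e_{\cS_\ol{c}}^*$ (with indices read in the order dictated by $\cS_{01} = \cS_0 \cS_1$; the case $c=1$ uses the usual reordering of tensor factors). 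Finally each $a_{i_c}$ is a marginal of a probability vector (a column of $A$), hence is itself a probability vector, so $A_c$ is stochastic.

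For the converse, suppose that for both values of $c$ there is a stochastic matrix $A_c$ with $\mar{\cA_\ol{c}}{A} = A_c \ot e_{\cS_\ol{c}}^*$. Reading off columns of the right-hand side, the $(i_0,i_1)$th column of $\mar{\cA_\ol{c}}{A}$ is the $i_c$th column of $A_c$, which does not depend on $i_\ol{c}$; that is precisely the assertion that the marginal distribution on answers from Alice$_c$ is independent of the question sent to Alice$_\ol{c}$. Since $A$ is assumed stochastic, it therefore denotes a valid no-signaling strategy for Team Alice. The characterization for Team Bob follows by the identical argument with $(\cS_c,\cA_c)$ replaced throughout by $(\cT_c,\cB_c)$.
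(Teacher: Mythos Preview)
Your proof is correct and follows essentially the same approach as the paper: the paper's argument is the paragraph immediately preceding the proposition, which (like you) simply translates the definition of no-signaling into the statement that the columns of $\mar{\cA_\ol{c}}{A}$ depend only on $i_c$, then packages these columns as $A_c\ot e_{\cS_\ol{c}}^*$. You have merely made both implications explicit and noted that $A_c$ is stochastic, which the paper leaves implicit.
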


Stochastic matrices $A$ meeting this condition are called \emph{no-signaling matrices}.
The matrices $A_c$ are said to \emph{witness} the fact that $A$ is a no-signaling matrix.
It follows immediately from Proposition \ref{prop:strategy} that the set of all no-signaling strategies available to each team is compact and convex---a fact already used in Section \ref{sec:defs:min-max} to assert the existence of optimal strategies for the teams.

\section{A relaxed min-max problem with penalties}
\label{sec:relaxation}

As mentioned in the introduction, the MWUM in its simplest form solves min-max optimization problems over probability vectors.
We optimize over stochastic matrices for the teams by using the MWUM simultaneously on each column of these matrices---a trick that works only for two-turn protocols, as we shall soon see.

We noted in Section \ref{sec:defs:no-sig} that the no-signaling matrices available to the teams form a strict subset of the stochastic matrices.
In order to optimize only over no-signaling matrices, in this section we specify a new min-max optimization problem $\mu(V)$ in which the teams may use \emph{arbitrary} strategies but pay a \emph{penalty} for strategies that violate the no-signaling condition.
By a careful choice of penalty, we remove the incentive of the teams to select inadmissible strategies without ruining the precarious convergence properties of the MWUM.

Some preliminary observations are given in Section \ref{sec:relaxation:bounds} before the formal definition of the new min-max problem $\mu(V)$ in Section \ref{sec:relaxation:mu}.
Equivalence of $\mu(V)$ and $\lambda(V)$ is proven in Section \ref{sec:relaxation:equivalence} with proofs of some lemmas in Section \ref{sec:relaxation:lemmas}.

\subsection{Bounds on two-turn verifiers}
\label{sec:relaxation:bounds}

First, for ease of notation we let $\Phi_V$ denote the unique linear transformation satisfying
\[
  \Inner{V}{A\ot B} =
  \Inner{\Phi_V(A)}{B} =
  \Inner{A}{\Phi_V^*(B)}
\]
for all matrices $A,B$.
Though a precise formula for $\Phi_V$ is of little use in this paper, for completeness we note that
\begin{align*}
  \Phi_V(A) &= \Ptr{\cS_{01}}{\Pa{A^*\ot I_{\cB_{01}}}V} \\
  \Phi_V^*(B) &= \Ptr{\cT_{01}}{\Pa{I_{\cA_{01}}\ot B^*}V}
\end{align*}
where $\trace_{\cS_{01}}$ and $\trace_{\cT_{01}}$ are partial trace mappings mentioned in Section \ref{sec:defs:mar}.
At the risk of hijacking terminology from functional analysis, the matrix $\Phi_V(A)$ can be viewed as a \emph{partial inner product} between $V$ and $A$.
This matrix can also be viewed as a new two-turn verifier for Team Bob obtained by ``hard-wiring'' Team Alice's strategy $A$ into the original verifier $V$.

Next, let $\pi\in\cS_{01}\cT_{01}$ denote the probability vector for the distribution on questions asked by the verifier.
In the notation of Section \ref{sec:defs:min-max}, the $(i,j)$th entry of $\pi$ is $\pi_{i,j}$---the probability with which the verifier asks questions $i$ to Team Alice and $j$ to Team Bob.
Let $\pi_\textrm{Alice}\in\cS_{01}$ denote the marginal distribution
\[ \pi_\textrm{Alice} = \mar{\cT_{01}}{\pi} \]
on questions to Team Alice, so that the $i$th entry $\pi_i$ of $\pi_\textrm{Alice}$ is $\pi_i\equiv\sum_j \pi_{i,j}$.
It is not hard to see that \[ V \leq e_{\cA_{01}\cB_{01}} \pi^* \] with equality achieved in the extreme case that each of the verifier's payout vectors $v_{i,j}$ is equal to the all-ones vector $e_{\cA_{01}\cB_{01}}$.
(Recall that matrix inequalities are entrywise.)
Similarly, it is easy to prove analogous inequalities for $\Phi_V(A),\Phi_V^*(B)$.
For example:

\begin{proposition}
\label{prop:Phi-bound}

For any stochastic matrix $B:\cT_{01}\to\cB_{01}$ it holds that
\( \Phi_V^*(B) \leq e_{\cA_{01}}\pi_\textrm{Alice}^*. \)

\end{proposition}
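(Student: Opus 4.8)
The plan is to write down an explicit formula for the columns of $\Phi_V^*(B)$ and then bound each entry using only the facts that the verifier's payout vectors are $0/1$ valued and that $B$ is column-stochastic.

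First I would unpack the defining relation $\Inner{A}{\Phi_V^*(B)} = \Inner{V}{A\ot B}$, which holds for all matrices $A$. Writing $A$ in terms of its columns $a_i$ and $V$ in terms of the data $\pi_{i,j},v_{i,j}$ from Section~\ref{sec:defs:min-max}, this reads $\sum_i \inner{a_i}{(\textrm{$i$th column of }\Phi_V^*(B))} = \sum_{i,j}\pi_{i,j}\Inner{v_{i,j}}{a_i\ot b_j}$. Regarding $v_{i,j}\in\cA_{01}\cB_{01}$ as the entrywise-nonnegative, $0/1$-valued matrix $N_{i,j}:\cB_{01}\to\cA_{01}$ with $(k,l)$th entry equal to the $(k,l)$th entry of $v_{i,j}$, one has $\Inner{v_{i,j}}{a_i\ot b_j} = \inner{a_i}{N_{i,j}b_j}$. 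Since the columns $a_i$ of $A$ range independently over all of $\cA_{01}$, comparing the coefficient of each $a_i$ gives the identity
\[ \textrm{$i$th column of }\Phi_V^*(B) \;=\; \sum_j \pi_{i,j}\, N_{i,j}\, b_j . \]

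Next I would bound the right-hand side entrywise. Each entry of $N_{i,j}b_j$ is a subsum of the entries of the probability vector $b_j$, hence lies in $[0,1]$; in particular $N_{i,j}b_j \leq e_{\cA_{01}}$. Since each $\pi_{i,j}\geq 0$ and $\sum_j \pi_{i,j} = \pi_i$, the $i$th entry of $\pi_\textrm{Alice}$, it follows that the $i$th column of $\Phi_V^*(B)$ is at most $\pi_i\, e_{\cA_{01}}$, which is precisely the $i$th column of $e_{\cA_{01}}\pi_\textrm{Alice}^*$. As this holds for every $i$, we conclude $\Phi_V^*(B)\leq e_{\cA_{01}}\pi_\textrm{Alice}^*$.

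I do not expect a genuine obstacle; the claim is a soft consequence of $0/1$ payouts together with the column-stochasticity of $B$. The only mild bookkeeping point is reshaping $v_{i,j}$ into the matrix $N_{i,j}$ consistently with the paper's vectorization convention. Alternatively, one can avoid the explicit column formula entirely: the map $W\mapsto\Phi_W^*(B)$ is monotone for the entrywise order whenever $B\geq 0$, being a composition of left multiplication by the nonnegative matrix $I_{\cA_{01}}\ot B^*$ with the partial trace $\trace_{\cT_{01}}$ (which, summing entries, also preserves the entrywise order); applying this monotone map to the already-established bound $V\leq e_{\cA_{01}\cB_{01}}\pi^*$ and computing $\Phi_W^*(B) = e_{\cA_{01}}\pi_\textrm{Alice}^*$ for $W=e_{\cA_{01}\cB_{01}}\pi^*$ and stochastic $B$ yields the same conclusion.
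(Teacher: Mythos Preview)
Your proposal is correct. Your alternative argument---apply the entrywise-monotone map $W\mapsto\Phi_W^*(B)$ to the bound $V\leq e_{\cA_{01}\cB_{01}}\pi^*$ and evaluate at the upper bound---is exactly the paper's proof, which phrases the same monotonicity dually by testing $\Inner{A}{\Phi_V^*(B)}\leq\Inner{A}{e_{\cA_{01}}\pi_\textrm{Alice}^*}$ against all nonnegative $A$. Your main approach differs only in that you first extract the explicit column formula $\sum_j\pi_{i,j}N_{i,j}b_j$ before bounding; this is slightly more hands-on but buys a concrete description of $\Phi_V^*(B)$ that the paper never writes down, at the cost of the reshaping bookkeeping you already flagged.
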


\begin{proof}

Let $A:\cS_{01}\to\cA_{01}$ be any nonnegative matrix and let $a_i,b_j$ denote the columns of $A,B$, respectively.
Then
\[
  \Inner{A}{\Phi_V^*(B)} =
  \Inner{V}{A\ot B} \leq
  \Inner{e_{\cA_{01}\cB_{01}}\pi^*}{A\ot B} =
  \sum_{i,j} \pi_{i,j} \Inner{e_{\cA_{01}}}{a_i} \Inner{e_{\cB_{01}}}{b_j}
\]
As $B$ is stochastic it must be that $\Inner{e_{\cB_{01}}}{b_j}=1$ for each $j$.
The above expression then simplifies to
\[ \sum_i \Pa{\sum_j \pi_{i,j}} \Inner{e_{\cA_{01}}}{a_i} = \Inner{e_{\cA_{01}}\pi_\textrm{Alice}^*}{A}. \]
As this inequality holds for all nonnegative matrices $A$ it must be that
\( \Phi_V^*(B) \leq e_{\cA_{01}}\pi_\textrm{Alice}^* \)
as claimed.
\end{proof}

\subsection{Definition of the relaxed min-max problem}
\label{sec:relaxation:mu}

 The relaxation $\mu(V)$ of $\lambda(V)$ is defined by
\[ \mu(V) = \min_{(A,A_0,A_1)} \max_{(B,\Pi_0,\Pi_1)} \Inner{f_V(A,A_0,A_1)}{(B,\Pi_0,\Pi_1)} \]
where the triples $(A,A_0,A_1)$ and $(B,\Pi_0,\Pi_1)$ have the form
\begin{alignat*}{3}
  A &: \cS_{01} \to \cA_{01} &\textrm{any stochastic} \\
  A_c &: \cS_c \to \cA_c &\textrm{any stochastic} &&&\qquad c\in\set{0,1}\\
  B &: \cT_{01} \to \cB_{01} &\textrm{no-signaling only} \\
  \Pi_c &: \cS_{01} \to \cA_c  \qquad\qquad & 0\leq\Pi_c\leq e_{\cA_c} \pi_\textrm{Alice}^* &&&\qquad c\in\set{0,1}.
\end{alignat*}
The linear mapping $f_V$ appearing in the inner product (and its adjoint) is defined by
\begin{align*}
  f_V &: (A,A_0,A_1) \mapsto
  \Pa{
    \Phi_V(A)\,,\,
    \mar{\cA_1}{A}-A_0\ot e_{\cS_1}^*\,,\,
    \mar{\cA_0}{A}-A_1\ot e_{\cS_0}^*
  } \\
  f_V^* &: (B,\Pi_0,\Pi_1) \mapsto
  \Pa{
    \Phi_V^*(B) + e_{\cA_1}\ot\Pi_0 + e_{\cA_0}\ot\Pi_1\,,\,
    -\Pi_0\Pa{I_{\cS_0}\ot e_{\cS_1}}\,,\,
    -\Pi_1\Pa{I_{\cS_1}\ot e_{\cS_0}}
  }
\end{align*}
so that
\[
  \Inner{f_V(A,A_0,A_1)}{(B,\Pi_0,\Pi_1)}
  = \Inner{V}{A\ot B} + \sum_{c\in\set{0,1}} \Inner{\mar{\cA_\ol{c}}{A}-A_c\ot e_{\cS_\ol{c}}^*}{\Pi_c}
\]
for all $(A,A_0,A_1)$ and all $(B,\Pi_0,\Pi_1)$.
(The adjoint mapping $f_V^*$ is not used until the algorithm of Figure \ref{fig:alg} and its proof of correctness in Proposition \ref{prop:lambda-alg}.)

\subsubsection*{Intuition}

Some explanation is in order.
As with the original min-max problem $\lambda(V)$, the matrices $A$ and $B$ represent the strategies employed by the teams.
Note, however, that in the definition of $\mu(V)$ Team Alice is now free to choose among arbitrary stochastic matrices for its strategy.
The matrices $A_0,A_1$ for Team Alice are purported witnesses to the claim that $A$ is a valid no-signaling matrix.

For the moment, we are concerned with relaxing the domain only of Team Alice's strategies, so Bob's strategy $B$ must still be no-signaling.
Bob's strategies will be addressed in Section \ref{sec:alg:oracle}.
The matrices $\Pi_0,\Pi_1$ for Team Bob are \emph{penalty matrices}---they are the means by which Team Bob penalizes Team Alice according to the extent that $A_0,A_1$ are false witnesses to the claim that $A$ is no-signaling.

The new objective function $\Inner{f_V(A,A_0,A_1)}{(B,\Pi_0,\Pi_1)}$ equals the old objective function $\inner{V}{A\ot B}$ plus two \emph{penalty terms}.
If $A$ is not a no-signaling matrix then the difference matrix
\[ \Delta_c \equiv \mar{\cA_\ol{c}}{A}-A_c\ot e_{\cS_\ol{c}}^* \]
must be nonzero for at least one $c$.
In this case, Bob selects $\Pi_c$ to pick out the positive entries of $\Delta_c$, which are then added the verifier's probability of rejection.

Let us informally explain why the restriction $0\leq\Pi_c\leq e_{\cA_c} \pi_\textrm{Alice}^*$ on penalty matrices is sufficient to remove Team Alice's incentive to cheat.
Suppose the $k_c$th entry of the $i$th column of the difference matrix $\Delta_c$ is a positive real number $\delta>0$ and suppose that $A'$ is a valid no-signaling matrix witnessed by $A_0,A_1$.
Since the verifier asks questions $i$ of Team Alice with probability $\pi_i$, it must be that, when selecting the probability with which to answer $k_c$, the advantage gained by Team Alice from using the inadmissible strategy $A$ instead of the no-signaling strategy $A'$ is at most $\delta\pi_i$.
By selecting a penalty matrix $\Pi_c$ so that the $k_c$th entry of the $i$th column of $\Pi_c$ is equal to $\pi_i$, Team Bob adds precisely the quantity $\delta\pi_i$ to the verifier's probability of rejection, thus eliminating the advantage obtained by Team Alice in acting according to $A$ instead of $A'$ for this particular choice of questions $i$ and answer $k_c$ from Alice$_c$.

Repeating this logic for all entries $(i,k_c)$ of $\Delta_c$, we find that Team Bob should select the penalty matrix $\Pi_c$ so that the $(i,k_c)$th entry is either zero or $\pi_i$ according to whether the corresponding entry of $\Delta_c$ is nonpositive or positive.
A penalty matrix of this form is called \emph{optimal for $(A,A_0,A_1)$} and satisfies
\[ \Inner{\Delta_c}{\Pi_c} = \Inner{\Delta_c^+}{e_{\cA_c} \pi_\textrm{Alice}^*} \]
where $\Delta_c^+$ is the positive part of $\Delta_c$.
(Here the \emph{positive part} of a real matrix $X$ is the matrix $X^+$ with the property that if $x$ is any entry of $X$ then the corresponding entry of $X^+$ is $\max\set{0,x}$.)

\subsection{Equivalence of the two min-max problems}
\label{sec:relaxation:equivalence}

We are now ready to prove the desired ``rounding theorem'' mentioned in the introduction, a corollary of which is the equivalence of the min-max problems $\mu(V)$ and $\lambda(V)$ (Corollary \ref{cor:rounding}).
The theorem employs two lemmas and their corollaries, the proofs of which appear below in Section \ref{sec:relaxation:lemmas}.

\begin{theorem}[Rounding theorem]
\label{thm:rounding}

Let $(A,A_0,A_1)$ be a feasible solution for $\mu(V)$ and let $\Pi_0^A,\Pi_1^A$ be optimal penalties for $(A,A_0,A_1)$.
There exists a no-signaling matrix $A_\ns$ witnessed by $A_0,A_1$ such that for all stochastic matrices $B$ it holds that
\[
  \Inner{V}{A_\ns\ot B}
  \leq \Inner{f_V(A,A_0,A_1)}{(B,\Pi_0^A,\Pi_1^A)}.
\]
Moreover, $A_\ns$ can be computed efficiently in parallel given $(A,A_0,A_1)$.

\end{theorem}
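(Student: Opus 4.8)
The plan is to construct $A_\ns$ directly from $A$ by replacing each column with a product distribution built from the marginal witnesses $A_0, A_1$, but only in those columns where $A$ fails the no-signaling condition, and to argue that this replacement can only help Team Alice relative to the penalized objective. Concretely, fix a question pair $i = (i_0, i_1)$. The $i$th column of $A_\ns$ should be a distribution on answer pairs $(k_0,k_1)$ whose marginal on $\cA_c$ equals the $i_c$th column of $A_c$ for both $c$; the obvious choice is the product of these two marginals, i.e.\ set the $i$th column of $A_\ns$ to $(A_0 e_{i_0}) \ot (A_1 e_{i_1})$, where $e_{i_c}$ is the standard basis vector. This is manifestly stochastic, manifestly no-signaling, and manifestly witnessed by $A_0, A_1$; it is also computed column-by-column by a trivial parallel circuit, which settles the ``moreover'' clause. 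The real content is the inequality.

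The key step is to bound $\Inner{V}{A_\ns \ot B}$ column-by-column against the penalized objective. Because $B$ is no-signaling (this is where we use that we have only relaxed Alice's domain), the verifier acting on $A_\ns \ot B$ factors nicely: in the $i$th column, the interaction with Team Bob depends on $A_\ns$ only through its two $\cA_c$-marginals, since $\Phi_V^*(B)$ and the penalty structure are set up so that what reaches Alice$_c$ is separated from what reaches Alice$_{\ol c}$. More precisely, I would first observe that $\Inner{V}{A_\ns \ot B} = \Inner{A_\ns}{\Phi_V^*(B)}$ and that, for a no-signaling $B$, the effective ``cost vector'' $\Phi_V^*(B)$ interacts with a product column $(A_0 e_{i_0}) \ot (A_1 e_{i_1})$ in a way that can be rewritten in terms of the marginal columns alone. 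Then the comparison reduces to showing, for each $i$ and each $c$, that the contribution of $A_\ns$'s $\cA_c$-marginal (which is $A_c e_{i_c}$, hard-wired by construction) is no larger than the contribution of $A$'s actual $\cA_c$-marginal $\mar{\cA_{\ol c}}{A} e_i$ plus the penalty $\Inner{\Delta_c}{\Pi_c^A}$ restricted to column $i$. Since $\Pi_c^A$ is optimal, $\Inner{\Delta_c}{\Pi_c^A} = \Inner{\Delta_c^+}{e_{\cA_c}\pi_\textrm{Alice}^*}$, and by Proposition~\ref{prop:Phi-bound} the relevant block of $\Phi_V^*(B)$ is bounded entrywise by $e_{\cA_c}\pi_\textrm{Alice}^*$; hence replacing $\mar{\cA_{\ol c}}{A} e_i$ by the smaller quantity $A_c e_{i_c} = \mar{\cA_{\ol c}}{A} e_i - \Delta_c e_i$ costs at most $\Inner{\Delta_c^+ e_i}{\pi_\textrm{Alice}^* \text{-weighted block}}$, which is exactly what the penalty term already accounts for. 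Summing over $c$ and over all columns $i$ gives the claimed inequality.

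I expect the main obstacle to be the bookkeeping that makes ``the interaction factors through the marginals'' precise: one must carefully track how a no-signaling $B$ forces $\Phi_V^*(B)$ to decompose so that the tensor-product column $A_\ns e_i$ can be compared, coordinate by coordinate, against a sum of two single-prover contributions, each controlled by its own penalty matrix. The subtlety is that $A_\ns e_i$ is a genuine product, so its inner product with $\Phi_V^*(B)$ is not literally a sum of two marginal inner products; one needs to insert the product structure of $B$'s relevant witness and use that $0 \le \Phi_V^*(B) \le e_{\cA_{01}}\pi_\textrm{Alice}^*$ to telescope the difference $\Inner{A}{\Phi_V^*(B)} - \Inner{A_\ns}{\Phi_V^*(B)}$ into the two penalty contributions. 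Once the correct intermediate ``hybrid'' column — agreeing with $A_\ns$ in the $\cA_0$-marginal and with $A$ in the $\cA_1$-marginal, or some such — is chosen, each of the two steps of the telescoping is a routine entrywise estimate using the bound on $\Phi_V^*(B)$ and the optimality of $\Pi_c^A$. The rest is assembling these column inequalities into the stated matrix inequality.
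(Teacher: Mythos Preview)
There is a genuine gap: your product construction $A_\ns e_i = (A_0 e_{i_0}) \ot (A_1 e_{i_1})$ does not satisfy the claimed inequality. Consider the case where $A$ is already no-signaling and witnessed by $A_0,A_1$. Then $\Delta_c = 0$ for both $c$, the optimal penalties contribute nothing, and the right-hand side reduces to $\Inner{V}{A\ot B}$. Your inequality would then assert $\Inner{V}{A_\ns\ot B} \leq \Inner{V}{A\ot B}$ for all $B$, i.e.\ that the \emph{product} of Alice's marginals is always at least as good for her as the original correlated no-signaling strategy. This is false. Take $\dim(\cS_c)=\dim(\cT_c)=\dim(\cB_c)=1$ and $\dim(\cA_c)=2$, let $V$ reject exactly when Alice$_0$ and Alice$_1$ disagree, and let $A$ be the perfectly correlated distribution $(1/2,0,0,1/2)$. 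Then $\Inner{V}{A\ot B}=0$, but your $A_\ns$ is uniform on $\cA_{01}$ and $\Inner{V}{A_\ns\ot B}=1/2$.

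The conceptual error is the claim that ``the interaction factors through the marginals'' when $B$ is no-signaling. The verifier may correlate Alice's two answers arbitrarily, so $\Phi_V^*(B)$ has no tensor factorization across $\cA_0,\cA_1$ regardless of $B$; hence the inner product with a product column cannot be rewritten in terms of marginal inner products, and no hybrid telescoping of the kind you sketch is available. (Incidentally, the theorem is stated for all \emph{stochastic} $B$, not only no-signaling $B$; the paper uses only stochasticity, via Proposition~\ref{prop:Phi-bound}.) The paper's construction avoids destroying correlations by modifying $A$ rather than discarding it: it subtracts nonnegative preimages $D_0^+,C_1^+$ of the marginal excesses (Corollary~\ref{cor:nonnegative-difference}) and then adds back a nonnegative $T$ to restore the exact marginals $A_c\ot e_{\cS_{\ol c}}^*$ (Corollary~\ref{cor:purify-no-sig}), setting $A_\ns = A - D_0^+ - C_1^+ + T$. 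The key accounting step is that $T$ and $D_0^++C_1^+$ have identical column sums, so $\Inner{T}{\Phi_V^*(B)}\leq\Inner{T}{e_{\cA_{01}}\pi_\textrm{Alice}^*}=\Inner{D_0^++C_1^+}{e_{\cA_{01}}\pi_\textrm{Alice}^*}$, and the latter unwinds to the two penalty terms. In short, only the \emph{mass moved} is charged against the penalty, whereas your construction moves the entire column.
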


\begin{proof}

For both $c\in\set{0,1}$ let $\Delta_c^+$ be the positive part of
\( \mar{\cA_\ol{c}}{A} -  A_c\ot e_{\cS_\ol{c}}^* \)
and observe that
\[ \Delta_c^+ \leq \mar{\cA_\ol{c}}{A}. \]
By Corollary \ref{cor:nonnegative-difference} below there exists a preimage $D_0^+\geq 0$ of $\Delta_0^+$ with
\begin{align*}
  A-D_0^+&\geq 0 \\
  \mar{\cA_1}{D_0^+} &= \Delta_0^+.
\end{align*}
Let $\Gamma_1^+$ be the positive part of
\( \Mar{\cA_0}{A-D_0^+} - A_1\ot e_{\cS_0}^*. \)
As with $\Delta_c$ above, observe that
\[ \Gamma_1^+ \leq \Mar{\cA_0}{A-D_0^+}. \]
(Moreover, it is easy to see that $\Gamma_1^+ \leq \Delta_1^+$---a fact we employ later in this proof.)
Apply Corollary \ref{cor:nonnegative-difference} again to obtain a preimage $C_1^+\geq 0$ of $\Gamma_1^+$ with
\begin{align*}
  A-D_0^+-C_1^+&\geq 0 \\
  \mar{\cA_0}{C_1^+} &= \Gamma_1^+.
\end{align*}
Thus, we have a matrix $A-D_0^+-C_1^+\geq 0$ such that for both $c\in\set{0,1}$ it holds that
\[ \Mar{\cA_\ol{c}}{A-D_0^+-C_1^+} \leq A_c\ot e_{\cS_\ol{c}}^*. \]
Hence there exist nonnegative matrices $T_c:\cS_{01}\to\cA_c$ with
\[ \Mar{\cA_\ol{c}}{A-D_0^+-C_1^+} + T_c = A_c\ot e_{\cS_\ol{c}}^*. \]
Applying $\marginal_{A_c}$ to both sides of this equation we see that
\( \mar{\cA_0}{T_0} = \mar{\cA_1}{T_1}. \)
By Corollary \ref{cor:purify-no-sig} below there exists a nonnegative matrix $T:\cS_{01}\to\cA_{01}$
with \( \mar{\cA_\ol{c}}{T} = T_c \) for both $c\in\set{0,1}$.
The desired no-signaling matrix $A_\ns$ is given by
\[ A_\ns = A-D_0^+-C_1^++T. \]
As $D_0^+$, $C_1^+$, and $T$ can be computed efficiently in parallel, so too can $A_\ns$.
To see that $A_\ns$ is a no-signaling matrix witnessed by $A_0,A_1$ it suffices to observe that
\[ \mar{\cA_\ol{c}}{A_\ns} = \Mar{\cA_\ol{c}}{A-D_0^+-C_1^+} + T_c = A_c\ot e_{\cS_\ol{c}}^*. \]

It remains only to verify the stated inequality.
To this end, we have
\begin{align*}
  \Inner{V}{A_\ns\ot B}
  &= \Inner{A}{\Phi_V^*(B)} - \Inner{D_0^++C_1^+}{\Phi_V^*(B)} + \Inner{T}{\Phi_V^*(B)} \\
  &\leq \Inner{A}{\Phi_V^*(B)} + \Inner{T}{\Phi_V^*(B)} \\
  &\leq \Inner{A}{\Phi_V^*(B)} + \Inner{T}{e_{\cA_{01}} \pi_\textrm{Alice}^*}
\end{align*}
As $A_\ns$ and $A$ are both stochastic matrices, it must be that $D_0^++C_1^+$ and $T$ have the same column sums.
As $\inner{T}{e_{\cA_{01}} \pi_\textrm{Alice}^*}$ equals the sum of the column sums of $T$ weighted according to $\pi_\textrm{Alice}$, the matrix $T$ can be replaced by $D_0^++C_1^+$ without affecting this inner product.
That is
\[ \inner{T}{e_{\cA_{01}} \pi_\textrm{Alice}^*} = \inner{D_0^++C_1^+}{e_{\cA_{01}} \pi_\textrm{Alice}^*}. \]
Expanding the right side of this equality we obtain
\[
  \Inner{\mar{\cA_1}{D_0^+}}{e_{\cA_0}\pi_\textrm{Alice}^*} + \Inner{\mar{\cA_0}{C_1^+}}{e_{\cA_1}\pi_\textrm{Alice}^*}
  = \Inner{\Delta_0^+}{e_{\cA_0}\pi_\textrm{Alice}^*} + \Inner{\Gamma_1^+}{e_{\cA_1}\pi_\textrm{Alice}^*}.
\]
As $\Gamma_1^+\leq\Delta_1^+$ this quantity is at most
\[ \Inner{\Delta_0^+}{e_{\cA_0}\pi_\textrm{Alice}^*} + \Inner{\Delta_1^+}{e_{\cA_1}\pi_\textrm{Alice}^*}. \]
Putting everything together, we have
\begin{align*}
    \inner{V}{A_\ns\ot B} &\leq \inner{A}{\Phi_V^*(B)} +
    \Inner{\Delta_0^+}{e_{\cA_0}\pi_\textrm{Alice}^*} + \Inner{\Delta_1^+}{e_{\cA_1}\pi_\textrm{Alice}^*} \\
  &= \inner{A}{\Phi_V^*(B)} +
    \Inner{\mar{\cA_1}{A} - A_0\ot e_{\cS_1}^*}{\Pi_0^A} +
    \Inner{\mar{\cA_0}{A} - A_1\ot e_{\cS_0}^*}{\Pi_1^A} \\
  &= \Inner{f_V\pa{A,A_0,A_1}}{(B,\Pi_0^A,\Pi_1^A)}.
\end{align*}
as desired.
\end{proof}

\begin{corollary}[Equivalence of min-max problems]
\label{cor:rounding}

The following hold for any verifier $V$ and any $\delta\geq 0$:
\begin{enumerate}

\item \label{item:lambda:gv}
$\mu(V)=\lambda(V).$

\item \label{item:lambda:Bob}
If $(B^\mu,\Pi_0^\mu,\Pi_1^\mu)$ is $\delta$-optimal for $\mu(V)$ then $B^\mu$ is $\delta$-optimal for $\lambda(V)$.

\item \label{item:lambda:Alice}
If $(A^\mu,A_0^\mu,A_1^\mu)$ is $\delta$-optimal for $\mu(V)$ then there exists $A_\ns$ such that $A_\ns$ is $\delta$-optimal for $\lambda(V)$ and $A_\ns$ can be computed efficiently in parallel given $(A^\mu,A_0^\mu,A_1^\mu)$.

\end{enumerate}

\end{corollary}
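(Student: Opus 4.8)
The plan is to derive all three items from the Rounding Theorem (Theorem \ref{thm:rounding}) together with the already-established fact that every no-signaling matrix is in particular a feasible $A$ for $\mu(V)$. First I would observe the ``easy direction'' $\mu(V)\le\lambda(V)$: given any no-signaling $A^\star$ witnessed by matrices $A_0^\star,A_1^\star$ (these exist by Proposition \ref{prop:strategy}), the triple $(A^\star,A_0^\star,A_1^\star)$ is feasible for $\mu(V)$ and both penalty terms in $\Inner{f_V(A^\star,A_0^\star,A_1^\star)}{(B,\Pi_0,\Pi_1)}$ vanish identically for every $(B,\Pi_0,\Pi_1)$, since $\mar{\cA_{\ol c}}{A^\star}=A_c^\star\ot e_{\cS_{\ol c}}^*$. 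Hence the inner objective reduces to $\inner{V}{A^\star\ot B}$, so Team Alice can guarantee value at most $\max_B\inner{V}{A^\star\ot B}$ in $\mu(V)$; minimizing over no-signaling $A^\star$ gives $\mu(V)\le\lambda(V)$.

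For the reverse inequality $\lambda(V)\le\mu(V)$, and simultaneously for item \ref{item:lambda:Alice}, I would invoke Theorem \ref{thm:rounding} directly. Let $(A,A_0,A_1)$ be any feasible solution for $\mu(V)$ and let $\Pi_0^A,\Pi_1^A$ be optimal penalties for it (these exist because the penalty constraint $0\le\Pi_c\le e_{\cA_c}\pi_\textrm{Alice}^*$ is satisfied by the explicit ``pick out the positive entries'' choice described in Section \ref{sec:relaxation:mu}). The theorem produces a no-signaling $A_\ns$, computable efficiently in parallel, with $\inner{V}{A_\ns\ot B}\le\Inner{f_V(A,A_0,A_1)}{(B,\Pi_0^A,\Pi_1^A)}$ for all stochastic $B$. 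Since $\Pi_0^A,\Pi_1^A$ are optimal penalties, the right-hand side is the maximum over all admissible $(\Pi_0,\Pi_1)$ for the given $(A,A_0,A_1)$ and the given $B$; taking the max over no-signaling $B$ as well, we get $\max_B\inner{V}{A_\ns\ot B}\le\max_{(B,\Pi_0,\Pi_1)}\Inner{f_V(A,A_0,A_1)}{(B,\Pi_0,\Pi_1)}$. Now applying this with $(A,A_0,A_1)$ a $\delta$-optimal minimizer $(A^\mu,A_0^\mu,A_1^\mu)$ for $\mu(V)$: the inner max on the right is at most $\mu(V)+\delta$, so $\lambda(V)\le\max_B\inner{V}{A_\ns\ot B}\le\mu(V)+\delta$. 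Letting $\delta\to 0$ (or using an exactly optimal Alice-triple, which exists by compactness and continuity) gives $\lambda(V)\le\mu(V)$, proving item \ref{item:lambda:gv}. For general $\delta$ the same chain shows $\max_B\inner{V}{A_\ns\ot B}\le\mu(V)+\delta=\lambda(V)+\delta$, which is exactly the statement that $A_\ns$ is $\delta$-optimal for $\lambda(V)$ (the other half of $\delta$-optimality, $\inner{V}{A_\ns\ot B^\star}\ge\lambda(V)-\delta$, is automatic since $A_\ns$ ranges over a subset of the domain used to define $\lambda(V)$), giving item \ref{item:lambda:Alice}.

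For item \ref{item:lambda:Bob}, suppose $(B^\mu,\Pi_0^\mu,\Pi_1^\mu)$ is $\delta$-optimal for $\mu(V)$; by definition of $\delta$-optimality this means $\Inner{f_V(A,A_0,A_1)}{(B^\mu,\Pi_0^\mu,\Pi_1^\mu)}\ge\mu(V)-\delta$ for every feasible $(A,A_0,A_1)$. Restricting attention to triples $(A,A_0,A_1)$ with $A$ itself no-signaling and $A_0,A_1$ its witnesses, the two penalty terms vanish as in the easy direction, so $\inner{V}{A\ot B^\mu}=\Inner{f_V(A,A_0,A_1)}{(B^\mu,\Pi_0^\mu,\Pi_1^\mu)}\ge\mu(V)-\delta=\lambda(V)-\delta$; since every no-signaling $A$ arises this way, $B^\mu$ is $\delta$-optimal for $\lambda(V)$ (again the complementary inequality $\inner{V}{A^\star\ot B^\mu}\le\lambda(V)+\delta$ for the Alice-optimal $A^\star$ holds because $B^\mu$ lies in the $\lambda(V)$-domain). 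I expect the main subtlety to be purely bookkeeping: keeping straight that ``$\delta$-optimal for $\mu$'' bounds the objective against $\mu(V)=\lambda(V)$ on \emph{one} side only, and that the \emph{other} side of $\delta$-optimality for $\lambda(V)$ comes for free because in each case the relevant rounded/restricted strategy already lives inside the domain defining $\lambda(V)$ — there is no genuine analytic obstacle once Theorem \ref{thm:rounding} is in hand.
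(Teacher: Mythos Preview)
Your proposal is correct and follows essentially the same route as the paper: the easy direction $\mu(V)\le\lambda(V)$ via vanishing penalty terms on no-signaling triples, the reverse direction and item \ref{item:lambda:Alice} via Theorem \ref{thm:rounding} with optimal penalties, and item \ref{item:lambda:Bob} by restricting to no-signaling Alice-triples. One small remark: the ``other half'' of $\delta$-optimality you mention for each player is not actually required by the paper's definition (each player has only one inequality to satisfy), so those parenthetical checks are harmless but unnecessary.
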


\begin{proof}

We begin with item \ref{item:lambda:gv}.
It is easy to prove $\lambda(V)\geq\mu(V)$:
let $A^\lambda$ be optimal for $\lambda(V)$, let $A_0,A_1$ witness the fact that $A^\lambda$ is no-signaling, and let $(B^\mu,\Pi_0^\mu,\Pi_1^\mu)$ be optimal for $\mu(V)$.
Then
\[ \lambda(V) \geq \Inner{V}{A^\lambda\ot B^\mu} = \Inner{f_V(A^\lambda,A_0,A_1)}{(B^\mu,\Pi_0^\mu,\Pi_1^\mu)} \geq \mu(V). \]
For the reverse inequality, let $(A^\mu,A_0^\mu,A_1^\mu)$ be optimal for $\mu(V)$, let $\Pi_0^{A^\mu},\Pi_1^{A^\mu}$ be optimal penalties for $(A^\mu,A_0^\mu,A_1^\mu)$, and let $B^\lambda$ be optimal for $\lambda(V)$.
By Theorem \ref{thm:rounding} there exists a no-signaling matrix $A_\ns$ witnessed by $A_0^\mu,A_1^\mu$ such that
\[ \Inner{V}{A_\ns \ot B^\lambda} \leq \Inner{f_V\Pa{A^\mu,A_0^\mu,A_1^\mu}}{\Pa{B^\lambda,\Pi_0^{A^\mu},\Pi_1^{A^\mu}}}. \]
The desired inequality $\lambda(V)\leq\mu(V)$ follows from the fact that the left side is at least $\lambda(V)$ and the right side is at most $\mu(V)$.
The proof of item \ref{item:lambda:gv} is complete.

Item \ref{item:lambda:Bob} follows easily from item \ref{item:lambda:gv}.
Let $A$ be a no-signaling matrix and let $A_0,A_1$ witness this fact.
Then
\[ \lambda(V) - \delta = \mu(V) - \delta \leq \Inner{f_V(A,A_0,A_1)}{(B^\mu,\Pi_0^\mu,\Pi_1^\mu)} = \Inner{V}{A\ot B^\mu}. \]
As $A$ was chosen arbitrarily, it follows that $B^\mu$ is $\delta$-optimal for $\lambda(V)$.

For item \ref{item:lambda:Alice}, let $B$ be any no-signaling matrix and let $\Pi_0^{A^\mu},\Pi_1^{A^\mu}$ be optimal penalties for the given $\delta$-optimal solution $(A^\mu,A_0^\mu,A_1^\mu)$.
By Theorem \ref{thm:rounding} there exists a no-signaling matrix $A_\ns$ witnessed by $A_0^\mu,A_1^\mu$ such that
\[ \Inner{V}{A_\ns \ot B} \leq \Inner{f_V\Pa{A^\mu,A_0^\mu,A_1^\mu}}{\Pa{B,\Pi_0^{A^\mu},\Pi_1^{A^\mu}}} \leq \mu(V) + \delta = \lambda(V) + \delta. \]
As $B$ was chosen arbitrarily, it follows that $A_\ns$ is $\delta$-optimal for $\lambda(V)$.
\end{proof}

\subsection{Lemmas used in the rounding theorem}
\label{sec:relaxation:lemmas}

The lemmas used in the proof of Theorem \ref{thm:rounding} are not difficult.
It is quite likely that some form of these lemmas is part of computer science ``folklore,'' though our notation may be nonstandard.

\begin{lemma}[Small marginals have small preimages]
\label{lm:nonnegative-difference}

Let $a\in\cA_{01}$ and $\vec\delta\in\cA_0$ be nonnegative vectors with
\( \vec\delta \leq \mar{\cA_1}{a}. \)
There exists a nonnegative vector $d\in\cA_{01}$ with $d\leq a$ and $\mar{\cA_1}{d} = \vec\delta$.
Moreover, $d$ can be computed efficiently in parallel given $a,\vec\delta$.

\end{lemma}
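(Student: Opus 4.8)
The plan is to construct the preimage $d$ entrywise by distributing each target marginal value $\vec\delta_{k_0}$ across the ``fibre'' $\{(k_0,k_1) : k_1 = 1,\dots,\dim(\cA_1)\}$ of $a$ in proportion to the entries of $a$ on that fibre. Concretely, for each index $k_0$ with $(\mar{\cA_1}{a})_{k_0} > 0$, set the $(k_0,k_1)$th entry of $d$ to be
\[
  d_{(k_0,k_1)} = \frac{\vec\delta_{k_0}}{(\mar{\cA_1}{a})_{k_0}} \cdot a_{(k_0,k_1)},
\]
and set $d_{(k_0,k_1)} = 0$ whenever $(\mar{\cA_1}{a})_{k_0} = 0$ (which by the hypothesis $\vec\delta \leq \mar{\cA_1}{a}$ forces $\vec\delta_{k_0} = 0$ and forces every $a_{(k_0,k_1)} = 0$ on that fibre as well, since $a$ is nonnegative and its fibre sums to zero). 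This defines $d$ unambiguously.

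The verification then breaks into three routine checks. First, $d \geq 0$: each entry is a product of nonnegative quantities since $0 \leq \vec\delta_{k_0} \leq (\mar{\cA_1}{a})_{k_0}$ makes the scaling factor lie in $[0,1]$ and $a \geq 0$. Second, $d \leq a$: the scaling factor on each fibre is at most $1$, so $d_{(k_0,k_1)} \leq a_{(k_0,k_1)}$ entrywise. Third, $\mar{\cA_1}{d} = \vec\delta$: summing over $k_1$ on the fibre of index $k_0$ gives $\frac{\vec\delta_{k_0}}{(\mar{\cA_1}{a})_{k_0}} \sum_{k_1} a_{(k_0,k_1)} = \frac{\vec\delta_{k_0}}{(\mar{\cA_1}{a})_{k_0}} \cdot (\mar{\cA_1}{a})_{k_0} = \vec\delta_{k_0}$ (and $0 = \vec\delta_{k_0}$ in the degenerate case). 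Finally, for the parallel-computability claim, note that $\mar{\cA_1}{a}$ is a vector of fibre sums computable in parallel, the division and multiplication defining each entry of $d$ are independent across indices, and the degenerate-case test is a single comparison; hence $d$ is computed by an $\mathsf{NC}$-type parallel algorithm from $a$ and $\vec\delta$.

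There is no real obstacle here — the only point requiring a moment's care is the handling of the degenerate fibres where $(\mar{\cA_1}{a})_{k_0} = 0$, which must be excluded from the division and treated separately; the hypothesis $\vec\delta \leq \mar{\cA_1}{a}$ is exactly what guarantees this case is consistent (both $\vec\delta_{k_0}$ and the whole fibre of $a$ vanish, so setting $d$ to zero there satisfies all three required properties). I would state the construction, dispatch the three properties in a short displayed computation, and remark on parallelizability in one sentence.
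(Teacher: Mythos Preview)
Your proposal is correct and is essentially identical to the paper's own proof: both construct $d$ by rescaling each fibre of $a$ by the ratio $\vec\delta_{k_0}/(\mar{\cA_1}{a})_{k_0}$, handle the zero-marginal case by setting $d$ to zero there, and verify the three properties in the same way.
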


\begin{proof}

Let $a_{(k_0,k_1)}$ and $\vec\delta_{k_0}$ denote the nonnegative entries of $a$ and $\vec\delta$, respectively.
Let $s_{k_0}$ denote the $k_0$th entry of $\mar{\cA_1}{a}$ so that
\[ s_{k_0} = \sum_{k_1=1}^{\dim(\cA_1)} a_{(k_0,k_1)}. \]
The desired vector $d$ has entries $d_{(k_0,k_1)}$ given by
\[ d_{(k_0,k_1)} =
  \left\{
    \begin{array}{ll}
      \displaystyle \vec\delta_{k_0}\frac{a_{(k_0,k_1)}}{s_{k_0}} & \textrm{when $s_{k_0}\neq 0$} \\
      0 & \textrm{otherwise}
    \end{array}
  \right.
\]
(Intuitively, the weight $\vec\delta_{k_0}$ required of $\sum_{k_1} d_{(k_0,k_1)}$ is ``spread out'' over each $d_{(k_0,k_1)}$ proportionately according to $a_{(k_0,k_1)}$.)
It is clear that this construction can be implemented efficiently in parallel.

Let us verify that $d\leq a$.
Observe that for the case $s_{k_0}\neq 0$ the ratio $\vec\delta_{k_0}/s_{k_0}$ is at most one because $\vec\delta\leq \mar{\cA_1}{a}$.
Then \[ d_{(k_0,k_1)} = a_{(k_0,k_1)} \frac{\vec\delta_{k_0}}{s_{k_0}} \leq a_{(k_0,k_1)} \]
as desired.
Of course, if $s_{k_0}= 0$ then $d_{(k_0,k_1)}=0$ by definition and hence $d_{(k_0,k_1)}\leq a_{(k_0,k_1)}$ because $a\geq 0$.

Let us verify that $\mar{\cA_1}{d} = \vec\delta$.
For the case $s_{k_0}\neq 0$ the $k_0$th entry of $\mar{\cA_1}{d}$ is given by
\[ \sum_{k_1=1}^{\dim(A_1)} d_{(k_0,k_1)} = \frac{\vec\delta_{k_0}}{s_{k_0}} \sum_{k_1=1}^{\dim(A_1)} a_{(k_0,k_1)} = \vec\delta_{k_0} \]
as desired.
As above, if $s_{k_0}=0$ then by definition $d_{(k_0,k_1)}=0$ for each $k_1$ and hence $\sum_{k_1} d_{(k_0,k_1)} = 0$.
As
$0\leq\vec\delta_{k_0}\leq s_{k_0}$
it must be that $\vec\delta_{k_0}=0$, too.
\end{proof}

\begin{corollary}
\label{cor:nonnegative-difference}

Let $A:\cS_{01}\to\cA_{01}$ and $\Delta:\cS_{01}\to\cA_0$ be nonnegative matrices with
\( \Delta \leq \mar{\cA_1}{A}. \)
There exists a nonnegative matrix $D:\cS_{01}\to\cA_{01}$ with $D\leq A$ and $\mar{\cA_1}{D} = \Delta$.
Moreover, $D$ can be computed efficiently in parallel given $A,\Delta$.

\end{corollary}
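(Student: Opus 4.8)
The plan is to reduce the corollary to Lemma \ref{lm:nonnegative-difference} by processing $A$ and $\Delta$ one column at a time. Recall from Section \ref{sec:defs:mar} that the marginal map $\marginal_{\cA_1}$ acts on a matrix by applying the vector version of $\marginal_{\cA_1}$ to each column. Consequently the entrywise matrix hypothesis $\Delta \leq \mar{\cA_1}{A}$ is equivalent to the family of vector inequalities $\vec\delta \leq \mar{\cA_1}{a}$, one for each column, where $a$ and $\vec\delta$ denote corresponding columns of $A$ and $\Delta$. Moreover each such $a$ and $\vec\delta$ is nonnegative since $A$ and $\Delta$ are. Thus the hypotheses of Lemma \ref{lm:nonnegative-difference} are met columnwise.

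Next I would invoke Lemma \ref{lm:nonnegative-difference} separately on each column pair $(a,\vec\delta)$ to obtain a nonnegative vector $d\in\cA_{01}$ with $d\leq a$ and $\mar{\cA_1}{d}=\vec\delta$, and then assemble $D$ as the matrix whose columns are these vectors $d$. Verifying the conclusion is then immediate: $D\geq 0$ and $D\leq A$ are columnwise facts that hold by construction, and $\mar{\cA_1}{D}=\Delta$ follows by checking one column at a time, again using that $\marginal_{\cA_1}$ acts columnwise on matrices. This yields the desired matrix $D$.

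For the efficiency claim, note that the columns of $D$ are produced independently of one another, and Lemma \ref{lm:nonnegative-difference} already guarantees that each column $d$ is computed by an efficient parallel procedure from $(a,\vec\delta)$ --- indeed a single round of entrywise rescaling. Running these column computations simultaneously gives an efficient parallel algorithm that outputs $D$ given $(A,\Delta)$.

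I do not expect any genuine obstacle here: the corollary is a mechanical ``vectorize over columns'' lift of the lemma, and the MWUM framework only ever uses it in that form. The one point that deserves an explicit sentence is the translation between the matrix inequality $\Delta\leq\mar{\cA_1}{A}$ and its columnwise form, which is nothing more than unwinding the definition of $\marginal_{\cA_1}$ on matrices recalled in Section \ref{sec:defs:mar}.
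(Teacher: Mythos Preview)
Your proposal is correct and matches the paper's own proof, which consists of the single sentence ``Apply Lemma \ref{lm:nonnegative-difference} to each of the columns of $A,\Delta$.'' You have simply spelled out the columnwise translation in more detail than the paper bothers to.
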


\begin{proof}

Apply Lemma \ref{lm:nonnegative-difference} to each of the columns of $A,\Delta$.
\end{proof}

\begin{lemma}[Disjoint marginals are always consistent]
\label{lm:purify-no-sig}

For both $c\in\set{0,1}$ let $t_c\in\cA_c$ be nonnegative vectors whose entries sum to the same value.
There exists a nonnegative vector $t\in\cA_{01}$ with $\mar{\cA_\ol{c}}{t}=t_c$ for both $c\in\set{0,1}$.
Moreover, $t$ can be computed efficiently in parallel given $t_0,t_1$.

\end{lemma}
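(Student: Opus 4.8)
The plan is to construct $t$ explicitly as a normalized outer product of $t_0$ and $t_1$ — essentially the "independent coupling" of the two given marginals. Write $m = \inner{e_{\cA_0}}{t_0} = \inner{e_{\cA_1}}{t_1}$ for the common value of the entry sums. First I would dispose of the degenerate case: if $m=0$ then $t_0$ and $t_1$ are both the zero vector (being nonnegative with vanishing sum), so $t=0\in\cA_{01}$ satisfies $\mar{\cA_\ol c}{t}=0=t_c$ for both $c$ and is trivially computable.

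Otherwise $m>0$, and I would set
\[ t = \frac{1}{m}\, t_0\ot t_1, \]
so that the $(k_0,k_1)$th entry of $t$ is the product of the $k_0$th entry of $t_0$ with the $k_1$th entry of $t_1$, divided by $m$. This vector is nonnegative because $t_0,t_1$ are, and each of its entries is obtained from the data by a single multiplication and division, so $t$ is computable efficiently in parallel.

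It then remains only to check the marginal conditions, and this is immediate from the defining identity $\marginal_{\cY}(x\ot y)=\inner{e_\cY}{y}\,x$ of Section \ref{sec:defs:mar}. Applying it twice gives
\[ \mar{\cA_1}{t} = \frac{1}{m}\Mar{\cA_1}{t_0\ot t_1} = \frac{\inner{e_{\cA_1}}{t_1}}{m}\,t_0 = t_0, \qquad \mar{\cA_0}{t} = \frac{\inner{e_{\cA_0}}{t_0}}{m}\,t_1 = t_1, \]
which is exactly $\mar{\cA_\ol c}{t}=t_c$ for both $c\in\set{0,1}$. There is no genuine obstacle in this lemma; the only subtlety worth flagging is that the outer-product formula divides by $m$, so the case $m=0$ must be handled separately before writing it down. (One could alternatively give a more combinatorial "water-filling" construction paralleling the proof of Lemma \ref{lm:nonnegative-difference}, but the normalized outer product is the shortest route and makes the parallelism transparent.)
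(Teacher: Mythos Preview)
Your proposal is correct and is essentially the same as the paper's own proof: both handle the $m=0$ case separately and then take the normalized outer product $t=\frac{1}{m}\,t_0\ot t_1$, with the paper writing this coordinatewise as $t_{(k_0,k_1)}=p_{k_0}q_{k_1}/s$ and verifying the marginals by summing, whereas you use the tensor identity for $\marginal_\cY$ directly.
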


\begin{proof}

Let $p_{k_0}$ and $q_{k_1}$ be the nonnegative entries of $t_0$ and $t_1$, respectively.
Let $s$ denote the sum of the entries of $t_0,t_1$ so that
\[ s = \sum_{k_0=1}^{\dim(\cA_0)} p_{k_0} = \sum_{k_1=1}^{\dim(\cA_1)} q_{k_1}. \]
If $s=0$ then it is clear that the desired vector $t$ is the zero vector.
For the remainder of the proof assume that $s\neq 0$.
The desired vector $t$ has entries $t_{(k_0,k_1)}$ given by
\[ t_{(k_0,k_1)} = \frac{p_{k_0} q_{k_1}}{s} \]
It is clear that this construction can be implemented efficiently in parallel.

Let us verify that $\mar{\cA_\ol{c}}{t} = t_c$ for both $c\in\set{0,1}$.
For the case $c=0$ the $k_0$th entry of $\mar{\cA_1}{t}$ is given by
\[ \sum_{k_1=1}^{\dim(\cA_1)} \frac{ p_{k_0} q_{k_1}}{s} = \frac{ p_{k_0} s}{s} = p_{k_0} \]
as desired.
The case $c=1$ is handled similarly.
\end{proof}

\begin{corollary}
\label{cor:purify-no-sig}

For both $c\in\set{0,1}$ let $T_c:\cS_{01}\to\cA_c$ be nonnegative matrices with $\mar{\cA_0}{T_0} = \mar{\cA_1}{T_1}$.
There exists a nonnegative matrix $T:\cS_{01}\to\cA_{01}$ with $\mar{\cA_\ol{c}}{T} = T_c$ for both $c\in\set{0,1}$.
Moreover, $T$ can be computed efficiently in parallel given $T_0,T_1$.

\end{corollary}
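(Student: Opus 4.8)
The plan is to reduce the corollary to its vector-valued counterpart, Lemma~\ref{lm:purify-no-sig}, by working one column of $\cS_{01}$ at a time. Recall from Section~\ref{sec:defs:mar} that $\marginal_{\cA_\ol{c}}$ acts on a matrix by applying the corresponding vector map to each column. So it suffices to produce, for each question index $i$, a nonnegative vector $t^{(i)}\in\cA_{01}$ whose $\cA_0$-marginal is the $i$th column of $T_1$ and whose $\cA_1$-marginal is the $i$th column of $T_0$: the matrix $T:\cS_{01}\to\cA_{01}$ whose $i$th column is $t^{(i)}$ then satisfies $\mar{\cA_1}{T}=T_0$ and $\mar{\cA_0}{T}=T_1$, i.e.\ $\mar{\cA_\ol{c}}{T}=T_c$ for both $c\in\set{0,1}$. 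Parallel computability of $T$ is inherited from that of each $t^{(i)}$ guaranteed by Lemma~\ref{lm:purify-no-sig}, since the columns are handled independently.

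The one point I would need to verify is that Lemma~\ref{lm:purify-no-sig} is genuinely applicable to the pair formed by the $i$th column of $T_0$ (a vector in $\cA_0$) and the $i$th column of $T_1$ (a vector in $\cA_1$): the hypothesis of that lemma demands that these two vectors have the same total entry sum. But the total entry sum of the $i$th column of $T_c$ equals, by the definition of the marginal map, the $i$th entry of $\mar{\cA_c}{T_c}$. Hence the hypothesis $\mar{\cA_0}{T_0}=\mar{\cA_1}{T_1}$ is exactly the assertion that these column sums agree for every $i$, and Lemma~\ref{lm:purify-no-sig} applies to each column.

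Granting that, the construction is immediate: for each $i$ invoke Lemma~\ref{lm:purify-no-sig} on the $i$th columns of $T_0$ and $T_1$ to obtain $t^{(i)}$, assemble the $t^{(i)}$ into the columns of $T$, and read off $\mar{\cA_\ol{c}}{T}=T_c$ from the column-wise construction together with the fact that $\marginal_{\cA_\ol{c}}$ acts columnwise. I expect no real obstacle: the mathematical substance lies entirely in Lemma~\ref{lm:purify-no-sig}, and the corollary is just its ``matricization,'' the only thing worth making explicit being the identification of the column-sum hypothesis of the lemma with the marginal-equality hypothesis of the corollary.
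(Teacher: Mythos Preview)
Your proposal is correct and matches the paper's approach exactly: the paper's proof consists of the single sentence ``Apply Lemma~\ref{lm:purify-no-sig} to each of the columns of $T_0,T_1$.'' Your write-up simply spells out the verification that the equal-column-sum hypothesis of Lemma~\ref{lm:purify-no-sig} is precisely the condition $\mar{\cA_0}{T_0}=\mar{\cA_1}{T_1}$, which the paper leaves implicit.
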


\begin{proof}

Apply Lemma \ref{lm:purify-no-sig} to each of the columns of $T_0,T_1$.
\end{proof}

\section{A parallel multiplicative weights algorithm}
\label{sec:alg}

In this section we complete the proof of our main result---that every decision problem that admits a two-turn interactive proof with competing teams of no-signaling provers is also in $\cls{PSPACE}$.
Most of the detail appears in Section \ref{sec:alg:alg} wherein we present an efficient parallel oracle-algorithm based on the MWUM that produces $\delta$-optimal no-signaling strategies for the teams, given
an oracle for ``best responses'' for Team Bob to a given candidate strategy for Alice.
We describe an efficient parallel implementation of the required oracle in Section \ref{sec:alg:oracle}, from which the unconditional efficiency of our algorithm immediately follows.
The ensuing inclusion of $\cls{MRG}_\ns(2,2)$ inside $\cls{PSPACE}$ is discussed in Section \ref{sec:alg:PSPACE}.

\subsection{The parallel algorithm}
\label{sec:alg:alg}

Precise statements of the problem solved by our algorithm and the oracle it requires are given below.
All input numbers are written as rational numbers in binary.
For matrix inputs, each entry is written explicitly.

\begin{problem}
  [Weak no-signaling equilibrium]
  \label{problem:lambda}
  \ \\[1mm]
  \begin{tabularx}{\textwidth}{lX}
    \emph{Input:} &
    A verifier matrix $V:\cS_{01}\cT_{01}\to\cA_{01}\cB_{01}$ and an accuracy parameter $\delta>0$.
    \\[1mm]
    \emph{Oracle:} &
    Weak no-signaling optimization.
    (See Problem \ref{problem:oracle} below.)
    \\[1mm]
    \emph{Output:} &
    $\delta$-optimal no-signaling strategies $\tilde{A},\tilde{B}$ for the min-max problem $\lambda(V)$.
  \end{tabularx}
\end{problem}

\begin{problem}
  [Weak no-signaling optimization]
  \label{problem:oracle}
  \ \\[1mm]
  \begin{tabularx}{\textwidth}{lX}
  \emph{Input:} &
  A verifier-Alice matrix $S:\cT_{01}\to\cB_{01}$ and an accuracy parameter $\delta>0$.
  \\[1mm]
  \emph{Output:} &
  A $\delta$-optimal no-signaling strategy $\tilde{B}$ for Team Bob.
  (That is, a no-signaling matrix $\tilde{B}$ such that $\inner{S}{\tilde{B}}\geq\inner{S}{B}-\delta$ for all no-signaling matrices $B$.)
  \end{tabularx}
\end{problem}

Given Corollary \ref{cor:rounding}, it suffices to find $\delta$-optimal solutions
$(\tilde{A},\tilde{A}_0,\tilde{A}_1)$ and $(\tilde{B},\tilde{\Pi}_0,\tilde{\Pi}_1)$
for $\mu(V)$ and then convert these solutions into $\delta$-optimal strategies for $\lambda(V)$.
This method is codified in the algorithm of Figure \ref{fig:alg}.

This algorithm is a straightforward modification of the standard multiplicative weights update method for equilibrium problems.
The precise formulation of the MWUM used in this paper is stated as Theorem \ref{thm:mwum}.
Our statement of this theorem is somewhat nonstandard: the result is usually presented in the form of an algorithm, whereas our presentation is purely mathematical.
However, a cursory examination of the literature---say, Kale's thesis \cite[Chapter 2]{Kale07}---reveals that our mathematical formulation is equivalent to the more conventional algorithmic form.

\begin{theorem}[Multiplicative weights update method---see Ref.~{\cite[Theorem 2]{Kale07}}]
\label{thm:mwum}

Fix an $\varepsilon\in(0,1/2)$.
Let $m^1,\dots,m^T$ be arbitrary $D$-dimensional ``loss'' vectors whose entries $m^t_i$ lay in the interval $[-\alpha,\alpha]$.
Let $w^1,\dots,w^T$ be $D$-dimensional nonnegative ``weight'' vectors whose entries $w^t_i$ are given recursively via
\begin{align*}
  w^1_i &= 1 \\
  w^{t+1}_i &= w^t_i \Pa{1-\varepsilon m^t_i}.
\end{align*}
Let $p^1,\dots,p^T$ be probability vectors obtained by normalizing each $w^1,\dots,w^T$.
For all probability vectors $p$ it holds that
\[
  \frac{1}{T}\sum_{t=1}^T \Inner{p^t}{m^t} \leq
  \Inner{p}{\frac{1}{T}\sum_{t=1}^T m^t} + \alpha\Pa{\varepsilon + \frac{\ln D}{\varepsilon T}}.
\]

\end{theorem}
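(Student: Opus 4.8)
The plan is to run the standard ``potential function'' analysis of the multiplicative weights rule. Introduce the total weight $\Phi^t = \sum_{i=1}^D w^t_i$ at step $t$, so that $\Phi^1 = D$ and $p^t_i = w^t_i/\Phi^t$ by definition. The idea is to sandwich $\Phi^{T+1}$: an \emph{upper} bound will be governed by the cumulative loss $\sum_t \Inner{p^t}{m^t}$ incurred by the algorithm, and a \emph{lower} bound by the cumulative loss $\sum_t m^t_i$ of each individual coordinate $i$. Comparing the two bounds produces a ``regret'' inequality for every fixed $i$, and averaging those inequalities against the distribution $p$ yields the statement.

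For the upper bound, the update rule gives immediately
\[
  \Phi^{t+1} = \sum_i w^t_i\pa{1 - \varepsilon m^t_i}
  = \Phi^t - \varepsilon \sum_i w^t_i m^t_i
  = \Phi^t\Pa{1 - \varepsilon\Inner{p^t}{m^t}}.
\]
Applying $1 + x \le e^x$ to each factor and telescoping over $t = 1,\dots,T$ produces $\Phi^{T+1} \le D \exp\pa{-\varepsilon\sum_{t=1}^T \Inner{p^t}{m^t}}$. For the lower bound, fix a coordinate $i$ and note that $\Phi^{T+1} \ge w^{T+1}_i = \prod_{t=1}^T\pa{1 - \varepsilon m^t_i}$. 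This is the step where the hypotheses $\varepsilon\in(0,1/2)$ and $\abs{m^t_i}\le\alpha$ are used: they keep every factor bounded safely away from $0$, so one may pass to logarithms and apply the elementary second-order estimate $\ln(1-x) \ge -x - x^2$ termwise. Summing over $t$ gives
\[
  \ln\Phi^{T+1} \ge -\varepsilon\sum_{t=1}^T m^t_i - \varepsilon^2\sum_{t=1}^T\pa{m^t_i}^2,
\]
and $\sum_t (m^t_i)^2$ is controlled by the width bound on the losses.

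Combining the two bounds — take logarithms of the upper bound, compare with the lower bound, and divide through by $\varepsilon T$ — gives, for every coordinate $i$,
\[
  \frac1T\sum_{t=1}^T \Inner{p^t}{m^t}
  \le \frac1T\sum_{t=1}^T m^t_i + \alpha\varepsilon + \frac{\alpha\ln D}{\varepsilon T}
\]
(the width bound enters exactly so that the two error terms carry a single factor of $\alpha$; if one prefers, normalize the losses to $[-1,1]$ at the outset and reinstate $\alpha$ afterwards). Multiplying the $i$-th inequality by $p_i$ and summing over $i$, the left-hand side is unchanged while the first term on the right becomes $\Inner{p}{\frac1T\sum_t m^t}$, which is precisely the claimed inequality. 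The one genuinely delicate point is the lower-bound step: one must ensure that the accumulated second-order error from $\prod_t(1-\varepsilon m^t_i)$ is only $O(\varepsilon^2 T)$ and not larger, and this is exactly what the restriction $\varepsilon<1/2$ together with boundedness of the losses guarantees — it is also the reason this clean regret bound, and hence the whole MWUM machinery, degrades for large step sizes.
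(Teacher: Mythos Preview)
The paper does not supply its own proof of this theorem; it simply cites Kale's thesis \cite[Theorem 2]{Kale07} and moves on. Your argument is exactly the standard potential-function analysis found there: track $\Phi^t=\sum_i w^t_i$, bound it above via $1+x\le e^x$ to get the algorithm's cumulative loss, bound it below by a single coordinate's weight using $\ln(1-x)\ge -x-x^2$, and combine. So your approach coincides with the reference the paper defers to, and is correct.

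One small remark on the bookkeeping: as written, the second-order term $\varepsilon^2\sum_t(m^t_i)^2$ bounds naturally by $\varepsilon^2\alpha^2 T$, which after dividing by $\varepsilon T$ gives $\varepsilon\alpha^2$ rather than $\varepsilon\alpha$; your parenthetical about normalizing the losses to $[-1,1]$ is the right fix, but note that doing so effectively replaces $\varepsilon$ by $\varepsilon\alpha$ in the update, so the hypothesis one really needs is $\varepsilon\alpha<1/2$ (not just $\varepsilon<1/2$). In the paper's application each column has $\alpha=O(\pi_i)\le O(1)$ and $\varepsilon$ is tiny, so this is harmless---but it is worth being explicit about where the width enters.
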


Note that Theorem \ref{thm:mwum} holds for \emph{all} choices of loss vectors $m^1,\dots,m^T$, including the case in which each $m^t$ is chosen adversarially based upon $w^t$.
This adaptive selection of loss vectors is typical in implementations of the MWUM.

\begin{figure}[p]
\hrule
\begin{enumerate}

\item Let $\varepsilon=\delta/10$ and let $T=\left\lceil \frac{\ln(\dim(\cA_{01}))}{\varepsilon^2} \right\rceil$.

Let
\(
  \Pa{W^1,W_0^1,W_1^1}
\)
denote the triple of all-ones matrices
and let
\( \Pa{A^1,A_0^1,A_1^1} \) denote the uniformly random strategy for Alice obtained by normalizing the columns of $\Pa{W^1,W_0^1,W_1^1}$.

\item
Repeat for each $t=1,\dots,T$:
\begin{enumerate}

\item
Compute optimal penalties $\Pi_0^t,\Pi_1^t$ for $(A^t,A_0^t,A_1^t)$ as described in Section \ref{sec:relaxation:mu}.
Use the oracle for Problem \ref{problem:oracle} to obtain a $\delta/2$-best response $B^t$ to the verifier-Alice matrix $\Phi_V(A^t)$.

\item
Compute the loss matrices $\Pa{M^t,M_0^t,M_1^t} = f_V^*\Pa{B^t,\Pi_0^t,\Pi_1^t}$.
Exit the loop now if $t=T$.

\item
Update the weight matrices according to the standard multiplicative weights update rule:
\[ \Pa{W^{t+1},W_0^{t+1},W_1^{t+1}} = \Pa{W^t,W_0^t,W_1^t}\boxtimes\Pa{\underbrace{\Pa{W^1,W_0^1,W_1^1}}_{\textrm{all-ones matrices}}-\varepsilon \Pa{M^t,M_0^t,M_1^t}} \]
where $\boxtimes$ denotes the (entrywise) matrix Schur product.
(See Theorem \ref{thm:mwum}.)

\item
Compute the updated triple $(A^{t+1},A_0^{t+1},A_1^{t+1})$ of stochastic matrices for Team Alice by normalizing the columns of $(W^{t+1},W_0^{t+1},W_1^{t+1})$.

\end{enumerate}

\item
Compute
\[
  (\tilde{A},\tilde{A}_0,\tilde{A}_1)=\frac{1}{T}\sum_{t=1}^T (A^t,A_0^t,A_1^t)
  \qquad \textrm{and} \qquad
  (\tilde{B},\tilde{\Pi}_0,\tilde{\Pi}_1) = \frac{1}{T} \sum_{t=1}^T (B^t,\Pi_0^t,\Pi_1^t)
\]
both of which are $\delta$-optimal for $\mu(V)$.
Compute the no-signaling matrix $\tilde{A}_\ns$ from $(\tilde{A},\tilde{A}_0,\tilde{A}_1)$ as described in Corollary \ref{cor:rounding}.

\item
Return $(\tilde{A}_\ns,\tilde{B})$ as the $\delta$-optimal strategies of Team Alice and Team Bob for $\lambda(V)$.

\end{enumerate}
\hrule
\caption{Algorithm that finds $\delta$-optimal solutions to the equilibrium problem $\lambda(V)$ for two-turn interactive proofs with competing teams of no-signaling provers (Problem \ref{problem:lambda}).}
\label{fig:alg}
\end{figure}

\begin{proposition}
\label{prop:lambda-alg}

The oracle-algorithm presented in Figure \ref{fig:alg} solves the weak no-signaling equilibrium problem (Problem \ref{problem:lambda}).
Assuming unit cost for the oracle, this algorithm can be implemented in parallel with run time bounded by a polynomial in $1/\delta$ and $\log(\dim(\cS_{01}\cT_{01}\cA_{01}\cB_{01}))$.

\end{proposition}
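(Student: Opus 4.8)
The plan is to verify two things about the algorithm of Figure \ref{fig:alg}: first, that its output $(\tilde A_\ns,\tilde B)$ consists of $\delta$-optimal no-signaling strategies for $\lambda(V)$, and second, that the whole computation can be carried out by an efficient parallel (oracle-)algorithm. For correctness, by Corollary \ref{cor:rounding} it suffices to show that $(\tilde A,\tilde A_0,\tilde A_1)$ and $(\tilde B,\tilde\Pi_0,\tilde\Pi_1)$ are $\delta$-optimal for the relaxed problem $\mu(V)$; the rounding step in item 3 of the algorithm then hands us the $\delta$-optimal $\tilde A_\ns$ for $\lambda(V)$, and item 2 of Corollary \ref{cor:rounding} gives $\delta$-optimality of $\tilde B$ directly. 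So the heart of the proof is a standard MWUM analysis applied "columnwise" to the stochastic matrices $(W^t,W_0^t,W_1^t)$.

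First I would set up the analysis so that Theorem \ref{thm:mwum} applies. The weight-update rule in the algorithm is exactly the Schur-product form of the multiplicative weights rule, and normalizing each column of $(W^t,W_0^t,W_1^t)$ produces the probability vectors $p^t$ of the theorem — one independent instance of MWUM per column of each of the three matrices. The loss vectors are the corresponding columns of $(M^t,M_0^t,M_1^t)=f_V^*(B^t,\Pi_0^t,\Pi_1^t)$. I would bound the entries of these loss matrices: using Proposition \ref{prop:Phi-bound}, the constraint $0\le\Pi_c^t\le e_{\cA_c}\pi_\textrm{Alice}^*$ on the (optimal) penalties, and the explicit formula for $f_V^*$, every entry of $(M^t,M_0^t,M_1^t)$ lies in $[-\alpha,\alpha]$ for a small absolute constant $\alpha$ (something like $\alpha\le 2$, since each summand contributing to an entry is a probability-weighted $0$-$1$ quantity). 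With $D=\dim(\cA_{01})$ (which dominates $\dim(\cA_c)$), $\varepsilon=\delta/10$, and $T=\lceil\ln(\dim(\cA_{01}))/\varepsilon^2\rceil$, the MWUM error term $\alpha(\varepsilon+\tfrac{\ln D}{\varepsilon T})$ is at most $\alpha\cdot 2\varepsilon=\alpha\delta/5$, comfortably below $\delta/2$ for suitable constant bookkeeping. Summing the per-column MWUM inequalities and repackaging columns back into matrices yields, for every feasible $(B,\Pi_0,\Pi_1)$,
\[
  \frac{1}{T}\sum_{t=1}^T \Inner{f_V(A^t,A_0^t,A_1^t)}{(B^t,\Pi_0^t,\Pi_1^t)}
  \ \le\ \Inner{f_V(\tilde A,\tilde A_0,\tilde A_1)}{(B,\Pi_0,\Pi_1)} + \delta/2,
\]
using bilinearity of the inner product with $f_V$ and the fact that $(\tilde A,\tilde A_0,\tilde A_1)$ is the average of the $(A^t,A_0^t,A_1^t)$. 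On the other side, since each $B^t$ is a $\delta/2$-best response to $\Phi_V(A^t)$ and $\Pi_0^t,\Pi_1^t$ are \emph{optimal} penalties for $(A^t,A_0^t,A_1^t)$ — hence maximize $\Inner{f_V(A^t,A_0^t,A_1^t)}{(B^t,\cdot,\cdot)}$ over penalty matrices — each term $\Inner{f_V(A^t,A_0^t,A_1^t)}{(B^t,\Pi_0^t,\Pi_1^t)}$ is within $\delta/2$ of $\max_{(B,\Pi_0,\Pi_1)}\Inner{f_V(A^t,A_0^t,A_1^t)}{(B,\Pi_0,\Pi_1)}\ge\mu(V)$. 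Chaining the two bounds gives one half of $\delta$-optimality for $(\tilde A,\tilde A_0,\tilde A_1)$; the symmetric argument for $(\tilde B,\tilde\Pi_0,\tilde\Pi_1)$ (bounding the averaged objective below by $\mu(V)-\delta/2$ via the MWUM inequality applied to the fixed optimal Alice solution, then above by the averaged value which the $\delta/2$-best responses $B^t$ control) gives the other half. Thus both averaged triples are $\delta$-optimal for $\mu(V)$, and Corollary \ref{cor:rounding} finishes correctness.

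For the complexity claim, I would observe that each iteration does a bounded number of matrix operations — forming optimal penalties (a parallel "positive part" computation using $\pi_\textrm{Alice}$, itself a marginal of $V$), one oracle call, one evaluation of $f_V^*$ (a fixed combination of partial traces, Kronecker products with $e$-vectors, and additions), one Schur product, and one columnwise normalization — all of which are $\mathsf{NC}$-type operations on matrices of dimension $\mathrm{poly}(\dim(\cS_{01}\cT_{01}\cA_{01}\cB_{01}))$, hence implementable in parallel time polylogarithmic in the dimensions. The loop runs $T=\mathrm{poly}(1/\delta,\log\dim(\cA_{01}))$ times; since $T$ iterations are inherently sequential, the total parallel run time is $T$ times the per-iteration cost, i.e.\ polynomial in $1/\delta$ and $\log(\dim(\cS_{01}\cT_{01}\cA_{01}\cB_{01}))$. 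The final rounding step is parallel-efficient by Theorem \ref{thm:rounding} / Corollary \ref{cor:rounding}, and the averaging in item 3 is a parallel prefix-sum. Charging unit cost per oracle call as stated, this gives the claimed bound.

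The main obstacle I anticipate is not any single step but the careful bookkeeping that ties together three separate constant-related quantities — the loss-entry bound $\alpha$, the MWUM regret term, and the $\delta/2$ oracle slack plus $\delta/2$ from optimal penalties — so that they sum to at most $\delta$ with the specific choices $\varepsilon=\delta/10$ and $T=\lceil\ln(\dim(\cA_{01}))/\varepsilon^2\rceil$; one has to be a little careful that it is $\dim(\cA_{01})$ and not a larger dimension that appears in $T$, which works because the penalty-matrix and witness-matrix columns live in the smaller spaces $\cA_c$ while the dominating term in the regret bound comes from the $A$-block. A secondary subtlety is confirming that the MWUM hypothesis genuinely applies column-by-column despite the loss matrices $(M^t,M_0^t,M_1^t)$ being chosen adaptively (via the oracle) from $(W^t,W_0^t,W_1^t)$ — but Theorem \ref{thm:mwum} is explicitly stated to allow adversarial/adaptive loss vectors, so this causes no trouble.
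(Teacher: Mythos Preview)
Your overall structure is right, but there is a genuine gap in the error analysis that would make the argument fail as written. You bound the entries of the loss matrices $(M^t,M_0^t,M_1^t)$ by an absolute constant $\alpha$ and then claim that ``summing the per-column MWUM inequalities'' gives total error $\delta/2$. But you are running one independent MWUM instance per column, and there are $\dim(\cS_{01})$ columns in the $A$-block alone (plus $\dim(\cS_c)$ in each $A_c$-block). With a uniform width $\alpha$, each column contributes error $\alpha(\varepsilon+\tfrac{\ln D}{\varepsilon T})$, so the sum is of order $\dim(\cS_{01})\cdot\alpha\varepsilon$---exponentially large in $|x|$, not $\delta/2$.

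The fix is precisely what your phrase ``probability-weighted'' hints at but never exploits: the $i$th column of $M^t$ has entries in $[0,3\pi_i]$ (from Proposition~\ref{prop:Phi-bound} together with $\Pi_c\le e_{\cA_c}\pi_\textrm{Alice}^*$), and the $i_c$th column of $M_c^t$ has entries in $[-\pi_{i_c},0]$. Applying Theorem~\ref{thm:mwum} with column-specific width $3\pi_i$ (resp.\ $\pi_{i_c}$) and \emph{then} summing over columns, the error terms collapse via $\sum_i\pi_i=1$ to $3(\varepsilon+\tfrac{\ln D}{\varepsilon T})$ for the $A$-block and $(\varepsilon+\tfrac{\ln D}{\varepsilon T})$ for each $A_c$-block, giving total error $O(\varepsilon)$ independent of $\dim(\cS_{01})$. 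This is not cosmetic---it is exactly the reason the method is limited to two-turn verifiers (cf.\ Section~\ref{sec:conclusion}).

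A secondary issue: the displayed inequality you attribute to MWUM has the free variable on the wrong side. MWUM applied to Alice's columns yields, for every stochastic triple $(A,A_0,A_1)$,
\[
  \frac{1}{T}\sum_{t=1}^T\Inner{f_V(A^t,A_0^t,A_1^t)}{(B^t,\Pi_0^t,\Pi_1^t)}\ \le\ \Inner{f_V(A,A_0,A_1)}{(\tilde B,\tilde\Pi_0,\tilde\Pi_1)}+\delta/2,
\]
i.e.\ the averaged triple on the right is $(\tilde B,\tilde\Pi_0,\tilde\Pi_1)$, with Alice's triple free---not the other way around. The chaining you sketch then goes through with this corrected inequality: for $\delta$-optimality of $(\tilde A,\ldots)$ one first uses that each $(B^t,\Pi_0^t,\Pi_1^t)$ is a $\delta/2$-best response to bound $\langle f_V(\tilde A,\ldots),(B,\ldots)\rangle$ by the left side plus $\delta/2$, and \emph{then} applies the MWUM inequality with $A=A^\star$ optimal to reach $\mu(V)+\delta$.
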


\begin{proof}

Let $m^t$ denote the $i$th column of $M^t$ for each $t=1,\dots,T$.
We argue that the entries of $m^t$ lay in the interval $[0,3\pi_i]$.
To this end, observe that the loss matrix $M^t$ is defined in Figure \ref{fig:alg} via the adjoint mapping $f_V^*$ as
\[ M^t = \Phi_V^*(B^t) + e_{\cA_1}\ot\Pi_0^t + e_{\cA_0}\ot\Pi_1^t \leq 3e_{\cA_{01}}\pi_\textrm{Alice}^* \]
where the inequality follows immediately from the bound $\Phi_V^*(B)\leq e_{\cA_{01}}\pi_\textrm{Alice}^*$ of Proposition \ref{prop:Phi-bound} and the restriction $\Pi_c\leq e_{\cA_c}\pi_\textrm{Alice}^*$ on penalty matrices.
The desired bound on the entries of $m^t$ follows from the observation that the $i$th column of $3e_{\cA_{01}}\pi_\textrm{Alice}^*$ is the vector whose entries are all equal to $3\pi_i$.

Let $a^t$ denote the $i$th column of $A^t$ for $t=1,\dots,T$.
It is clear that the construction of the probability vectors $a^t$ in terms of the loss vectors $m^t$ presented in Figure \ref{fig:alg} obeys the condition of Theorem \ref{thm:mwum}.
It therefore follows that for any probability vector $a\in\cA_{01}$ we have
\[
  \frac{1}{T}\sum_{t=1}^T \Inner{a^t}{m^t} \leq
  \Inner{a}{\frac{1}{T}\sum_{t=1}^T m^t} + 3\pi_i\Pa{\varepsilon + \frac{\ln(\dim(\cA_{01}))}{\varepsilon T}}.
\]
Summing these inequalities over all columns $i$ we find that for any stochastic matrix $A$ it holds that
\[
  \frac{1}{T}\sum_{t=1}^T \Inner{A^t}{M^t} \leq
  \Inner{A}{\frac{1}{T}\sum_{t=1}^T M^t} + 3\Pa{\varepsilon + \frac{\ln(\dim(\cA_{01}))}{\varepsilon T}}.
\]

A similar bound on the stochastic matrices $A_0^t,A_1^t$ in terms of the loss matrices $M_0^t,M_1^t$ can be derived in much the same way.
For completeness, let us make this argument explicit.
For both $c\in\set{0,1}$ and for each question $i_c$ let $\pi_{i_c}$ denote the probability with which the referee asks question $i_c$ to Alice$_c$.
Let $m_c^t$ denote the $i_c$th column of $M_c^t$ for each $t=1,\dots,T$.
We argue that the entries of $m_c^t$ lay in the interval $[-\pi_{i_c},0]$.
Recall the loss matrix $M_c^t$ is defined in Figure \ref{fig:alg} via the adjoint mapping $f_V^*$ as
\[ M_c^t = -\Pi_c^t \Pa{I_{\cS_c}\ot e_{\cS_\ol{c}}} \geq -e_{\cA_c}\mar{\cS_\ol{c}}{\pi_\textrm{Alice}}^* \]
where the inequality follows immediately from the restriction $\Pi_c\leq e_{\cA_c}\pi_\textrm{Alice}^*$ on penalty matrices.
The desired bound on the entries of $m_c^t$ follows from the observation that the $i_c$th column of $e_{\cA_c}\mar{\cS_\ol{c}}{\pi_\textrm{Alice}}^*$ is the vector whose entries are all equal to $\pi_{i_c}$.

As above, let $a_c^t$ denote the $i_c$th column of $A_c^t$ for $t=1,\dots,T$.
It is clear that the construction of the probability vectors $a_c^t$ in terms of the loss vectors $m_c^t$ presented in Figure \ref{fig:alg} obeys the condition of Theorem \ref{thm:mwum}.
It therefore follows that for any probability vector $a_c\in\cA_c$ we have
\[
  \frac{1}{T}\sum_{t=1}^T \Inner{a_c^t}{m^t} \leq
  \Inner{a_c}{\frac{1}{T}\sum_{t=1}^T m_c^t} + \pi_{i_c}\Pa{\varepsilon + \frac{\ln(\dim(\cA_c))}{\varepsilon T}}.
\]
Summing these inequalities over all columns $i_c$ we find that for any stochastic matrix $A_c$ it holds that
\[
  \frac{1}{T}\sum_{t=1}^T \Inner{A_c^t}{M_c^t} \leq
  \Inner{A_c}{\frac{1}{T}\sum_{t=1}^T M_c^t} + \varepsilon + \frac{\ln(\dim(\cA_c))}{\varepsilon T}.
\]

At this point we have derived three inequalities for three arbitrary stochastic matrices $A,A_0,A_1$.
Summing these inequalities and substituting $(M^t,M_0^t,M_1^t)=f_V^*(B^t,\Pi_0^t,\Pi_1^t)$ and the choices of $\varepsilon,T$ listed in Figure \ref{fig:alg} we find that for any triple $(A,A_0,A_1)$ of stochastic matrices it holds that
\begin{equation}
\label{eq:mu-bound}
  \frac{1}{T}\sum_{t=1}^T \Inner{f_V(A^t,A_0^t,A_1^t)}{(B^t,\Pi_0^t,\Pi_1^t)} \leq
  \Inner{f_V(A,A_0,A_1)}{\frac{1}{T}\sum_{t=1}^T (B^t,\Pi_0^t,\Pi_1^t)} + \delta/2.
\end{equation}

The remainder of this proof is a straightforward adaptation of Kale's analysis for the much simpler class of two-player zero-sum games in normal form \cite[Section 2.3.1]{Kale07}.
We argue that the triples  $(\tilde{A},\tilde{A}_0,\tilde{A}_1)$ and $(\tilde{B},\tilde{\Pi}_0,\tilde{\Pi}_1)$ appearing in Figure \ref{fig:alg} are $\delta$-optimal for $\mu(V)$.
Let us begin with the triple $(\tilde{A},\tilde{A}_0,\tilde{A}_1)$.
Choose any $(B,\Pi_0,\Pi_1)$ and let $(A^\star,A_0^\star,A_1^\star)$ be optimal for $\mu(V)$.
We have
\begin{align*}
  \Inner{\frac{1}{T}\sum_{t=1}^T f_V(A^t,A_0^t,A_1^t)}{(B,\Pi_0,\Pi_1)}
  &\leq \frac{1}{T}\sum_{t=1}^T \Inner{f_V(A^t,A_0^t,A_1^t)}{(B^t,\Pi_0^t,\Pi_1^t)} + \delta/2 \\
  &\leq \Inner{f_V(A^\star,A_0^\star,A_1^\star)}{\frac{1}{T}\sum_{t=1}^T (B^t,\Pi_0^t,\Pi_1^t)} + \delta \leq \mu(V)+\delta
\end{align*}
as desired.
(The first inequality is because each $(B^t,\Pi_0^t,\Pi_1^t)$ is a $\delta/2$-best response to $(A^t,A_0^t,A_1^t)$; the second is Eq.~\eqref{eq:mu-bound}.)

To see that $(\tilde{B},\tilde{\Pi}_0,\tilde{\Pi}_1)$ is $\delta$-optimal for $\mu(V)$, let $(A,A_0,A_1)$ be any triple of stochastic matrices.
We have
\[
  \Inner{f_V(A,A_0,A_1)}{\frac{1}{T}\sum_{t=1}^T(B^t,\Pi_0^t,\Pi_1^t)}
  \geq \frac{1}{T}\sum_{t=1}^T \Inner{f_V(A^t,A_0^t,A_1^t)}{(B^t,\Pi_0^t,\Pi_1^t)} -\delta/2
  \geq \mu(V) - \delta
\]
as desired.
(The first inequality is Eq.~\eqref{eq:mu-bound}; the second is because each $(B^t,\Pi_0^t,\Pi_1^t)$ is a $\delta/2$-best response to $(A^t,A_0^t,A_1^t)$.)
Finally, it follows from Corollary \ref{cor:rounding} that $\tilde{A}_\ns$ and $\tilde{B}$ are $\delta$-optimal strategies for $\lambda(V)$.

That the algorithm admits an efficient parallel implementation is straightforward.
In each iteration computations of optimal penalties, the loss matrices (via $f_V^*$), the multiplicative weights update rule, and normalization are all simple operations involving only addition and multiplication of individual rational entries of matrices that can easily be implemented in parallel.
Efficiency follows from the fact that the total number of iterations is bounded by a polynomial in $1/\delta$ and the logarithm of $\dim(\cS_{01}\cT_{01}\cA_{01}\cB_{01})$, the size of the verifier matrix.
\end{proof}

\subsection{Implementations of the best-response oracle for Team Bob}
\label{sec:alg:oracle}

In order for the algorithm of Figure \ref{fig:alg} to be unconditionally efficient, we require a parallel implementation of the oracle for weak no-signaling optimization (Problem \ref{problem:oracle}).
Fortunately, all the work is already done: Problem \ref{problem:oracle} is the optimization problem that arises naturally from two-turn, two-prover interactive proofs with no-signaling provers.
Thus, the parallel algorithm of Ito  \cite{Ito09} can be re-used to implement the oracle in our algorithm without complication.

In Ito's terminology, the verifier-Alice matrix $\Phi_V(A)$ specifies a \emph{game} and the two no-signaling provers comprising Team Bob are the \emph{players}.
Ito does not claim that an explicit strategy for the players can be found efficiently in parallel.
Rather, he claims only that the task of distinguishing high success probability from low success probability admits a parallel algorithm, as this simpler task is sufficient to put $\cls{MIP}_\ns(2,2)$ inside $\cls{PSPACE}$.
However, a cursory glance at the details of Ito's proof reveals a parallel construction of near-optimal no-signaling strategies for the players as required by Problem \ref{problem:oracle}.

Alternatively, the oracle for weak no-signaling optimization (Problem \ref{problem:oracle}) can be implemented by re-using the algorithm for weak no-signaling equilibrium (Problem \ref{problem:lambda}) listed in Figure \ref{fig:alg} of the present paper.
Indeed, Problem \ref{problem:oracle} is a special case of Problem \ref{problem:lambda} in which one team has a trivial strategy space.
In this special case the required ``oracle'' demands only weak no-signaling optimization over a trivial strategy space, which of course admits a trivial parallel implementation.
In other words, the algorithm of Figure \ref{fig:alg} can be used in a two-level recursive fashion to give an unconditionally efficient parallel algorithm for Problem \ref{problem:lambda}.

\subsection{Containment in PSPACE}
\label{sec:alg:PSPACE}

The desired containment of $\cls{MRG}_\ns(2,2)$ inside $\cls{PSPACE}$ now follows in the usual way:

\begin{numberedtheorem}{\ref{thm:main-result}}

Every decision problem that admits a two-turn interactive proof with competing teams of two no-signaling provers per team is also in $\cls{PSPACE}$.
Thus, we obtain the identity $\cls{MRG}_\ns(2,2)=\cls{PSPACE}$.

\end{numberedtheorem}

\begin{proof}

Let $L$ be a decision problem in $\cls{MRG}_\ns(2,2)$ with completeness $c$ and soundness $s$ and let $x$ be any input string.
Each entry of the exponential-size verifier matrix $V:\cS_{01}\cT_{01}\to\cA_{01}\cB_{01}$ induced by the verifier on input $x$ can be computed in space polynomial in $|x|$ by simulating every choice of randomness for the verifier.
In order to decide whether $x$ is a yes-instance or no-instance of $L$ it suffices to find $\delta$-optimal strategies for the teams for $\delta = \pa{c-s}/3$, which permits us to distinguish $\lambda(V)\geq c$ from $\lambda(V)\leq s$.
It follows from Proposition \ref{prop:lambda-alg} and the discussion in Section \ref{sec:alg:oracle} that the algorithm of Figure \ref{fig:alg} can be used to find $\delta$-optimal strategies for the teams and can be implemented in parallel with run time bounded by a polynomial in $1/\delta$ and the logarithm of the dimensions of $V$.
As the dimensions of $V$ scale exponentially with $|x|$ and $\delta$ scales as an inverse polynomial in $|x|$ the total run time of this parallel algorithm scales polynomially with $|x|$ and can therefore be simulated in polynomial space in the usual way \cite{Borodin77}.
\end{proof}

\section{Open problems, limitations of the present approach}
\label{sec:conclusion}

Attention is restricted in this paper to interactions with no more than two no-signaling provers per team and no more than two messages exchanged with each prover.
The purpose for this restriction, quite simply, is that this class of interactions appears to be the largest to which our techniques apply.

For all we know, interactions with three messages for a prover or three provers on a team could be sufficiently powerful to capture all of $\cls{EXP}$.
Indeed, it is consistent with current knowledge that a three-message protocol for $\cls{EXP}$ might require only \emph{one} prover per team, or that a three-prover no-signaling protocol for $\cls{EXP}$ might require only \emph{one} team of provers.
Given this paucity of upper bounds for similar, seemingly weaker models it is hoped that any reservation at the restrictions in our model is more than compensated by the fact that we are able to say anything at all about it.

Let us list some natural extensions of the two-prover, two-turn model and point out exactly where our method fails for these extensions.
\begin{description}

\item[More than two turns, only one prover per team.]
Perhaps the most important open problem related to our work is the complexity of $k$-turn interactive proofs with competing provers for constants $k\geq 3$.
This problem, which dates back at least to 1997 \cite{FeigeK97}, is still open even in the special case of only one prover per team.
With only one prover per team, the question is really a game-theoretic question with a much wider application than just interactive proofs.

Our method fails for this case because we do not have a bound on the verifier matrix of the form $V \leq e_{\cA_{01}\cB_{01}} \pi^*$ such as that appearing in Proposition \ref{prop:Phi-bound}.
Thus, we do not obtain a good enough bound on the loss vectors appearing in our variant of the multiplicative weights update method.

\item[More than two turns, only one team of no-signaling provers.]
The complexity of $k$-turn multi-prover interactive proofs with two no-signaling provers is still open for $k\geq 3$, even with only one team of provers \cite{Ito09}.
For ordinary multi-prover interactive proofs---in which the provers are not allowed to implement arbitrary no-signaling strategies---it is known that a multi-turn protocol with any number of provers can be simulated by another protocol with only two turns and two provers \cite{FeigeL92}.

Our method fails here for the same reason as above---that we cannot bound the loss vectors in the multiplicative weights update method for a multi-turn verifier.

\item[More than two provers, only one team of no-signaling provers.]
Similarly, the complexity of two-turn multi-prover interactive proofs with more than two no-signaling provers is still open, even with only one team of provers \cite{Ito09}.
As mentioned above, ordinary multi-prover interactive proofs require only two provers \cite{FeigeL92}.

Our method does not extend to this case either, as there is no known analogue of Lemma \ref{lm:purify-no-sig} for more than two provers.

\item[Quantum verifier and/or provers.]
Even with two no-signaling provers, two turns of interaction, and only one team of provers, it is still not known that the $\cls{PSPACE}$ upper bound holds when either the verifier or provers can send quantum messages \cite{Ito09}.
Here the problem is that Lemma \ref{lm:purify-no-sig} does not hold for quantum states.

\end{description}

\section*{Acknowledgements}

The author is grateful to Tsuyoshi Ito, Sarvagya Upadhyay, John Watrous, and Xiaodi Wu for helpful discussions.
This research was conducted while the author was a postdoc at the Institute for Quantum Computing and School of Computer Science at the University of Waterlo in Waterloo, Ontario, Canada, at which time the author was supported by the Government of Canada through Industry Canada, the Province of Ontario through the Ministry of Research and Innovation, NSERC, DTO-ARO, CIFAR, and QuantumWorks.

\newcommand{\etalchar}[1]{$^{#1}$}


\begin{thebibliography}{BOGKW88}

\bibitem[AHK12]{v008a006}
Sanjeev Arora, Elad Hazan, and Satyen Kale.
\newblock The multiplicative weights update method: a meta-algorithm and
  applications.
\newblock {\em Theory of Computing}, 8(6):121--164, 2012.

\bibitem[Bab85]{Babai85}
L\'aszl\'o Babai.
\newblock Trading group theory for randomness.
\newblock In {\em Proceedings of the 17th ACM Symposium on Theory of Computing
  (STOC 1985)}, pages 421--429, 1985.

\bibitem[BBL{\etalchar{+}}06]{BrassardBL+06}
Gilles Brassard, Harry Buhrman, Noah Linden, Andr\'e M\'ethot, Alain Tapp, and
  Falk Unger.
\newblock A limit on nonlocality in any world in which communication complexity
  is not trivial.
\newblock {\em Physical Review Letters}, 96(25):250401, 2006.
\newblock arXiv:quant-ph/0508042.

\bibitem[BFL91]{BabaiF+91}
L\'aszl\'o Babai, Lance Fortnow, and Carsten Lund.
\newblock Non-deterministic exponential time has two-prover interactive
  protocols.
\newblock {\em Computational Complexity}, 1(1):3--40, 1991.

\bibitem[BM88]{BabaiM88}
L\'aszl\'o Babai and Shlomo Moran.
\newblock {A}rthur-{M}erlin games: a randomized proof system, and a hierarchy
  of complexity classes.
\newblock {\em Journal of Computer and System Sciences}, 36(2):254--276, 1988.

\bibitem[BOGKW88]{Ben-OrG+88}
Michael Ben-Or, Shafi Goldwasser, Joe Kilian, and Avi Wigderson.
\newblock Multi-prover interactive proofs: how to remove intractability
  assumptions.
\newblock In {\em Proceedings of the 20th ACM Symposium on Theory of Computing
  (STOC 1988)}, pages 113--131, 1988.

\bibitem[Bor77]{Borodin77}
Allan Borodin.
\newblock On relating time and space to size and depth.
\newblock {\em SIAM Journal on Computing}, 6(4):733--744, 1977.

\bibitem[CCL94]{CaiC+94}
Jin-Yi Cai, Anne Condon, and Richard Lipton.
\newblock {PSPACE} is provable by two provers in one round.
\newblock {\em Journal of Computer and System Sciences}, 48(1):183--193, 1994.

\bibitem[CGJ09]{CleveGJ07}
Richard Cleve, Dmitry Gavinsky, and Rahul Jain.
\newblock Entanglement-resistant two-prover interactive proof systems and
  non-adaptive {PIRs}.
\newblock {\em Quantum Information and Computation}, 9:648--656, 2009.
\newblock arXiv:0707.1729 [quant-ph].

\bibitem[CHTW04]{CleveH+04}
Richard Cleve, Peter H{\o}yer, Ben Toner, and John Watrous.
\newblock Consequences and limits of nonlocal strategies.
\newblock In {\em Proceedings of the 19th Conference on Computational
  Complexity}, pages 236--249, 2004.
\newblock arXiv:quant-ph/0404076.

\bibitem[Fan53]{Fan53}
K.~Fan.
\newblock Minimax theorems.
\newblock {\em Proceedings of the National Academy of Sciences}, 39:42--47,
  1953.

\bibitem[FK97]{FeigeK97}
Uriel Feige and Joe Kilian.
\newblock Making games short.
\newblock In {\em Proceedings of the 29th ACM Symposium on Theory of Computing
  (STOC 1997)}, pages 506--516, 1997.

\bibitem[FKS95]{FeigenbaumK+95}
Joan Feigenbaum, Daphne Koller, and Peter Shor.
\newblock A game-theoretic classification of interactive complexity classes.
\newblock In {\em Proceedings of the 10th Conference on Structure in Complexity
  Theory}, pages 227--237, 1995.

\bibitem[FL92]{FeigeL92}
Uriel Feige and L\'aszl\'o Lov\'asz.
\newblock Two-prover one-round proof systems: their power and their problems.
\newblock In {\em Proceedings of the 24th ACM Symposium on Theory of Computing
  (STOC 1992)}, pages 733--744, 1992.

\bibitem[FRS94]{FortnowR+94}
Lance Fortnow, John Rompel, and Michael Sipser.
\newblock On the power of multi-prover interactive protocols.
\newblock {\em Theoretical Computer Science}, 134(2):545--557, 1994.

\bibitem[FS92]{FeigeS92}
Uriel Feige and Adi Shamir.
\newblock Multi-oracle interactive protocols with constant space verifiers.
\newblock {\em Journal of Computer and System Sciences}, 44:259--271, 1992.

\bibitem[FST90]{FeigeS+90}
Uriel Feige, Adi Shamir, and Moshe Tennenholtz.
\newblock The noisy oracle problem.
\newblock In {\em Advances in Cryptology -- Proceedings of Crypto'88}, volume
  403 of {\em Lecture Notes in Computer Science}, pages 284--296. Springer,
  1990.

\bibitem[GMR89]{GoldwasserM+89}
Shafi Goldwasser, Silvio Micali, and Charles Rackoff.
\newblock The knowledge complexity of interactive proof systems.
\newblock {\em SIAM Journal on Computing}, 18(1):186--208, 1989.

\bibitem[GS89]{GoldwasserS89}
Shafi Goldwasser and Michael Sipser.
\newblock Private coins versus public coins in interactive proof systems.
\newblock In Silvio Micali, editor, {\em Randomness and Computation}, volume~5
  of {\em Advances in Computing Research}, pages 73--90. JAI Press, 1989.

\bibitem[GW13]{GutoskiW13-invited}
Gus Gutoski and Xiaodi Wu.
\newblock Parallel approximation of min-max problems.
\newblock {\em Computational Complexity}, 22(2):385--428, 2013.
\newblock arXiv:1011.2787 [quant-ph].

\bibitem[Hol09]{Holenstein09}
Thomas Holenstein.
\newblock Parallel repetition: Simplification and the no-signaling case.
\newblock {\em Theory of Computing}, 5(8):141--172, 2009.
\newblock arXiv:cs/0607139 [cs.CC].

\bibitem[IKM09]{ItoK+09}
Tsuyoshi Ito, Hirotada Kobayashi, and Keiji Matsumoto.
\newblock Oracularization and two-prover one-round interactive proofs against
  nonlocal strategies.
\newblock In {\em Proceedings of the 24th IEEE Conference on Computational
  Complexity (CCC 2009)}, pages 217--228, 2009.
\newblock arXiv:0810.0693 [quant-ph].

\bibitem[Ito10]{Ito09}
Tsuyoshi Ito.
\newblock Polynomial-space approximation of no-signaling provers.
\newblock In {\em Proceedings of the 37th International Colloquium on Automata,
  Languages and Programming (ICALP 2010)}, volume 6198 of {\em Lecture Notes in
  Computer Science}, pages 140--151. Springer, 2010.
\newblock arXiv:0908.2363 [cs.CC].

\bibitem[JJUW11]{JainJ+11}
Rahul Jain, Zhengfeng Ji, Sarvagya Upadhyay, and John Watrous.
\newblock {QIP}$=${PSPACE}.
\newblock {\em Journal of the ACM}, 58(6):article 30, 2011.

\bibitem[JUW09]{JainU+09}
Rahul Jain, Sarvagya Upadhyay, and John Watrous.
\newblock Two-message quantum interactive proofs are in {PSPACE}.
\newblock In {\em Proceedings of the 50th IEEE Symposium on Foundations of
  Computer Science (FOCS 2009)}, pages 534--543, 2009.
\newblock arXiv:0905.1300 [quant-ph].

\bibitem[JW09]{JainW08}
Rahul Jain and John Watrous.
\newblock Parallel approximation of non-interactive zero-sum quantum games.
\newblock In {\em Proceedings of the 24th IEEE Conference on Computational
  Complexity (CCC 2009)}, pages 243--253, 2009.
\newblock arXiv:0808.2775 [quant-ph].

\bibitem[Kal07]{Kale07}
Satyen Kale.
\newblock {\em Efficient algorithms using the multiplicative weights update
  method}.
\newblock PhD thesis, Princeton University, 2007.

\bibitem[KM92]{KollerM92}
Daphne Koller and Nimrod Megiddo.
\newblock The complexity of two-person zero-sum games in extensive form.
\newblock {\em Games and Economic Behavior}, 4:528--552, 1992.

\bibitem[KMvS94]{KollerMvS94}
Daphne Koller, Nimrod Megiddo, and Bernhard von Stengel.
\newblock Fast algorithms for finding randomized strategies in game trees.
\newblock In {\em Proceedings of the 26th ACM Symposium on Theory of Computing
  (STOC 1994)}, pages 750--759, 1994.

\bibitem[KR10]{KempeR10}
Julia Kempe and Oded Regev.
\newblock No strong parallel repetition with entangled and non-signaling
  provers.
\newblock In {\em Proceedings of the 25th IEEE Conference on Computational
  Complexity (CCC 2010)}, pages 7--15, 2010.
\newblock arXiv:0911.0201 [quant-ph].

\bibitem[LFKN92]{LundF+92}
Carsten Lund, Lance Fortnow, Howard Karloff, and Noam Nisan.
\newblock Algebraic methods for interactive proof systems.
\newblock {\em Journal of the ACM}, 39(4):859--868, 1992.

\bibitem[Sha92]{Shamir92}
Adi Shamir.
\newblock {IP} $=$ {PSPACE}.
\newblock {\em Journal of the ACM}, 39(4):869--877, 1992.

\bibitem[vD05]{vanDam05}
Wim van Dam.
\newblock Implausible consequences of superstrong nonlocality.
\newblock arXiv:quant-ph/0501159, 2005.

\bibitem[Vil38]{Ville38}
Jean Ville.
\newblock Sur la th{\'e}orie g{\'e}n{\'e}rale des jeux o{\'u} intervient
  l'habilet{\'e} des joueurs.
\newblock {\em Trait{\'e} du calcul des probabilit{\'e}s et des applications},
  IV(2):105--113, 1938.
\newblock In French.

\bibitem[WK06]{WarmuthK06}
Manfred Warmuth and Dima Kuzmin.
\newblock Online variance minimization.
\newblock In {\em Proceedings of the 19th Conference on Learning Theory},
  volume 4505 of {\em Lecture Notes in Computer Science}, pages 514--528, 2006.

\bibitem[Wu10]{Wu10}
Xiaodi Wu.
\newblock Equilibrium value method for the proof of {QIP}$=${PSPACE}.
\newblock arXiv:1004.0264 [quant-ph], 2010.

\bibitem[You01]{Young01}
Neal Young.
\newblock Sequential and parallel algorithms for mixed packing and covering.
\newblock In {\em Proceedings of the 42nd IEEE Symposium on Foundations of
  Computer Science (FOCS 2001)}, pages 538--546, 2001.

\end{thebibliography}
\end{document}